\begin{document}
%
\title{An Evolutionary Game With the Game Transitions Based on the Markov Process}
%
%
%

\author{Minyu Feng, ~\IEEEmembership{Member,~IEEE,}
        Bin Pi,
        Liang-Jian Deng, ~\IEEEmembership{Senior Member,~IEEE,}
        and J\"{u}rgen Kurths  
\thanks{This work was supported in part by the National Nature Science Foundation of China (NSFC) under Grant No. 62206230 and No. 12271083, in part by the Humanities and Social Science Fund of Ministry of Education of the People's Republic of China under Grant No. 21YJCZH028, and in part by the Natural Science Foundation of Sichuan Province under Grant No. 2022NSFSC0501.}

\thanks{Minyu Feng is with the College of Artificial Intelligence, Southwest University, Chongqing 400715, China.

Bin Pi and Liang-Jian Deng are with the School of Mathematical Sciences, University of Electronic Science and Technology of China, Chengdu 611731, China (e-mail: liangjian.deng@uestc.edu.cn).

J\"{u}rgen Kurths is with the Potsdam Institute for Climate Impact Research, 14437 Potsdam, Germany, and also with the Institute of Physics, Humboldt University of Berlin, 12489 Berlin, Germany.}}
\maketitle \thispagestyle{plain}\pagestyle{plain}

\begin{abstract}
The psychology of the individual is continuously changing in nature, which has a significant influence on the evolutionary dynamics of populations. To study the influence of the continuously changing psychology of individuals on the behavior of populations, in this paper, we consider the game transitions of individuals in evolutionary processes to capture the changing psychology of individuals in reality, where the game that individuals will play shifts as time progresses and is related to the transition rates between different games. Besides, the individual's reputation is taken into account and utilized to choose a suitable neighbor for the strategy updating of the individual. Within this model, we investigate the statistical number of individuals staying in different game states and the expected number fits well with our theoretical results. Furthermore, we explore the impact of transition rates between different game states, payoff parameters, the reputation mechanism, and different time scales of strategy updates on cooperative behavior, and our findings demonstrate that both the transition rates and reputation mechanism have a remarkable influence on the evolution of cooperation. Additionally, we examine the relationship between network size and cooperation frequency, providing valuable insights into the robustness of the model.
\end{abstract}

\begin{IEEEkeywords}
Network evolutionary game, Game transitions, Markov process, Reputation
\end{IEEEkeywords}
\maketitle \thispagestyle{plain}\pagestyle{plain}

%

\thispagestyle{empty}

\section{Introduction}

\IEEEPARstart{C}{ooperation} is often understood as the prosocial behavior of bearing a cost to provide a benefit to another individual \cite{01}, which implies that selfish and unrelated individuals should improve the payoff of others at their own cost \cite{02}. However, according to Darwin's theory of evolution \cite{03}, defection should be the result of natural selection in a social dilemma, but there exist many cooperative behaviors in human society and nature. For example, the worker bees of honey bees will sacrifice their lives to protect the colony. Vampire bats will allow their companions to suck their own blood to help preserve the lives of their companions when necessary. Therefore, understanding the emergence and maintenance of cooperation in a competitive world is significant in explaining some of the prosocial behavior in nature, which has attracted the attention of researchers from a wide range of fields \cite{04}.

With the emergence of classical networks such as the small-world network proposed by Watts and Strogatz in 1998 \cite{05} and the scale-free network proposed by Barab$\acute{a}$si and Albert in 1999 \cite{06}, various novel network models have been proposed \cite{07}. For instance, a novel evolving network model has been established considering the growing and decreasing process based on the queueing system \cite{08}. Li et al. \cite{09} introduced an evolving population network through the migration of people and studied the spread of disease on networks. Three novel models based on the homogeneous Poisson, nonhomogenous Poisson, and birth death process were proposed to reveal the influence of the vertex generating mechanism of complex networks \cite{10}. The novel complex networks can be a theoretical tool for our study of the social behavior of structured populations.

At the same time, numerous game models have been proposed to describe different types of social dilemmas in order to characterize the game process between individuals in a population, among which the famous are the snowdrift game (SDG) \cite{11}, \cite{12}, prisoner's dilemma game (PDG) \cite{13}, \cite{14}, and stag-hunt game (SHG) \cite{15}, \cite{16}. We can apply game theory to understand social behavior from a new perspective. The network evolutionary game theory, which combines game theory with complex networks, has become one of the most useful frameworks for studying the emergence of cooperation in populations. Moreover, a large number of mechanisms that can facilitate the evolution of cooperative behavior have also been investigated, among which the five rules proposed by Nowak in 2006 \cite{17} are well-known. In addition, other mechanisms such as the behavior of conformity \cite{18}, \cite{19}, memory mechanisms \cite{20}, \cite{21}, rewards and punishments \cite{22}, \cite{23}, etc, have also been shown to play a significant role in the maintenance of cooperation.

However, it is worth noting that in most previous studies, it has been assumed that the games played by individuals are deterministic, which can be interpreted as the game is constant all the time. In fact, determinism is only a special case, while stochasticity is a common phenomenon in life. For example, the payoffs between enterprises do not remain the same all the time although they adopt the same strategy but instead change dynamically over time. In other words, the game between individuals should not just stay in one state as time progresses, but it should also evolve over time. In the past few years, some researchers have studied this issue. For example, Su et al. \cite{24} introduced game transitions into classical models of evolutionary dynamics and found that game transitions can significantly reduce the critical benefit-to-cost threshold for cooperation to evolve in social dilemmas. Hilbe et al. \cite{25} utilized the theory of stochastic games and evolutionary game theory to analyze the proposed idea and obtained that the dependence of the public resource on previous interactions can significantly increase the propensity to cooperate. Moreover, as is well-known, reputation is an efficient and ubiquitous social control mechanism in natural societies, which is also crucial in the interaction of individuals. In general, individuals who are willing to help others will have a higher reputation, and they will also be more likely to receive help from others. Some researchers \cite{26} have taken reputation into account in dynamic network modeling, where individuals decide whether to break an edge based on reputation and found that the evolutionary game on this can effectively lead to the formation of cooperator clusters. Hu et al. \cite{27} performed Monte Carlo simulations on social networks to determine critical values of the degree of rationality and the reputation threshold that warrants high levels of trust and social wealth. Luo et al. \cite{28} simulated the evolution of the environmental governance cooperative behavior of enterprises considering the supervision behavior of the government and the reputation evaluation behavior of environmental social organizations.

In this paper, we consider that the game state of each individual is not fixed, but will change over time. We note that unlike the previous one on game transitions, in our novel model, they are not caused by changes in the environment, but each individual will change from one game state to another at a rate that follows an exponential distribution, thus characterizing the changing psychology of the individual. We utilize the exponential distribution since there are extensive studies \cite{29}, \cite{30} showing that there are human behaviors that obey exponential distributions. For example, Liang et al. \cite{31} built models for 20 million trajectories with fine granularity collected from more than 10 thousand taxis in Beijing and found that the taxis' traveling displacements in urban areas tend to follow an exponential distribution instead of a power-law. In other words, the game state of an individual can be thought of as a Markov chain in continuous time, and its state space is $\{G_0, G_1, ..., G_n\}$, where $G_i$ ($i=0, 1, \cdots, n$) denotes a particular game model. Hereby, we emphasize two confusing concepts: evolutionary game and game evolution. In this paper, the evolutionary game focuses on the updating of individuals' strategies, while game evolution focuses on the updating of individuals' game states, which illustrates that they are different concepts. In addition, we also take the reputation of the individual into account when individuals update their strategies. Specifically, each individual $j$ has a reputation value $Re_j$, which is updated according to the individual's previous strategy: if the individual cooperates in the last round, $Re_j$ will increase in a step of $\delta$, and vice versa, while $Re_j$ also has an upper or lower limit, i.e., reputation cannot go up or down indefinitely. Moreover, in order to fit better with reality, individuals prefer to interact with individuals with high reputations, i.e., they are more likely to choose individuals with high reputations for payoff comparison and strategy learning. Generally, the main contributions of this paper can be summarized as follows:
\begin{itemize}
\item We employ the game transitions of individuals based on the Markov process to describe the changing psychology of individuals in reality, where the game that individuals will play shifts as time passes and is related to the corresponding transition rates between different games.
\item We introduce the reputation mechanism to allow individuals to choose highly reputable neighbors with a high probability for strategy learning, and it evolves dynamically according to the individual's strategy.
\item The statistical number of individuals staying in different game states is studied and the relative error between the simulation results and theoretical results is calculated, it is found to be small, which indicates the correctness of the theory we proposed.
\item The impacts of transition rates between different game states and payoff parameters with and without reputation mechanism on the evolution of cooperation are investigated, and the results demonstrate that both transition rates and reputation mechanism facilitate the maintenance of cooperation.
\item The robustness of the model is verified by investigating the impact of different network sizes on the cooperation ratio, and the influence of different processes of strategy updating time on the cooperative behavior is also studied.
\end{itemize}

The remainder of this paper is structured as follows: Firstly, we present the evolutionary dynamics with game transitions of individuals and provide related theoretical analysis in Sec. \ref{part II}. Then, we carry out simulations to illustrate the validity of our model and explore the evolution of cooperation under different conditions in Sec. \ref{part III}. Finally, we summarize our results, and further give outlooks in Sec. \ref{part IV}.

\section{Evolutionary game on complex networks based on Markov method}  \label{part II}

In this section, we propose a novel evolutionary game model on complex networks regarding the game transitions as well as the reputations of individuals. Concretely, considering that individuals in reality do not always play one game without change, but have a time-varying mechanism, which can be understood as the individual's psychology changing all the time. In order to characterize this behavior, we introduce the Markov approach to represent the game transitions of individuals, which can be interpreted as each individual changing from one game to another with a certain probability. In reality, individuals in a population have various contact relationships with each other, and such relations can be described by complex networks, where a vertex in the network represents an individual and an edge indicates the interaction between two individuals \cite{32}, \cite{33}. Besides, the reputation mechanism of individuals is also considered, in which individuals are more inclined to choose individuals with high reputations for the payoff comparison to learn their strategies when conducting strategy updates.

Considering the reality that the game models of individuals will transform from one game model to another, we regard the game model utilized by an individual as a Markov chain, where the game model represents the state space of the Markov chain. In other words, the future game model of an individual is independent of the past game models and only relies on the current game model, but it is also related to the past game models while the correlation is so weak that it is almost negligible. Besides, this assumption has been proved to be reasonable by many studies \cite{34}, \cite{35}. In the following analysis, we present the Markov chain of individual game states, introduce the reputation into the game between network individuals, and give some definitions and theoretical analysis to obtain the limit distribution of the $n$ game states.

\subsection{Markov Chain of Individual Game States}

First, we introduce the process of individual game state transitions, which is depicted by a Markov chain. The game state of each individual $j$ in the population can be considered as a Markov chain, denoted as $\{X_j(t), t\geq 0\}$ with the state space $E_j=\{G_0, G_1, \cdots, G_n\}$, where the integers $0, 1, \cdots, n$ describe $n$ different game models, respectively, such as the prisoner's dilemma game, snowdrift game, stag hunt game, etc. It is worth mentioning that the individual's game model is the individual's game state, i.e., they are equivalent. Besides, we suppose that an individual will transform the game state $G_i$ into the game state $G_{i+1}$ at an exponential rate, $\lambda_i$, whereas the game state $G_{i+1}$ transitions to $G_i$ at an exponential transition rate $\mu_{i+1}$. This means that the duration of an individual holding a game state in the network follows an exponential distribution, and it will become another game state if the duration finishes. As an example, Fig. \ref{game state transition} clearly shows the transition of an individual game state, which can also be described as the game evolution of individuals. Suppose that all individuals in the network adopt the game state $G_0$ at the beginning of the evolutionary process, the individuals stay in the game state $G_0$ for a time duration following an exponential distribution with rate $\lambda_0$ and then turn the game state $G_1$. Subsequently, the individual will perform $G_2$ at an exponential rate $\lambda_1$ or $G_0$ at an exponential rate $\mu_1$.

\begin{center}
\begin{figure}[ht]
\centering
\includegraphics[scale=0.30]{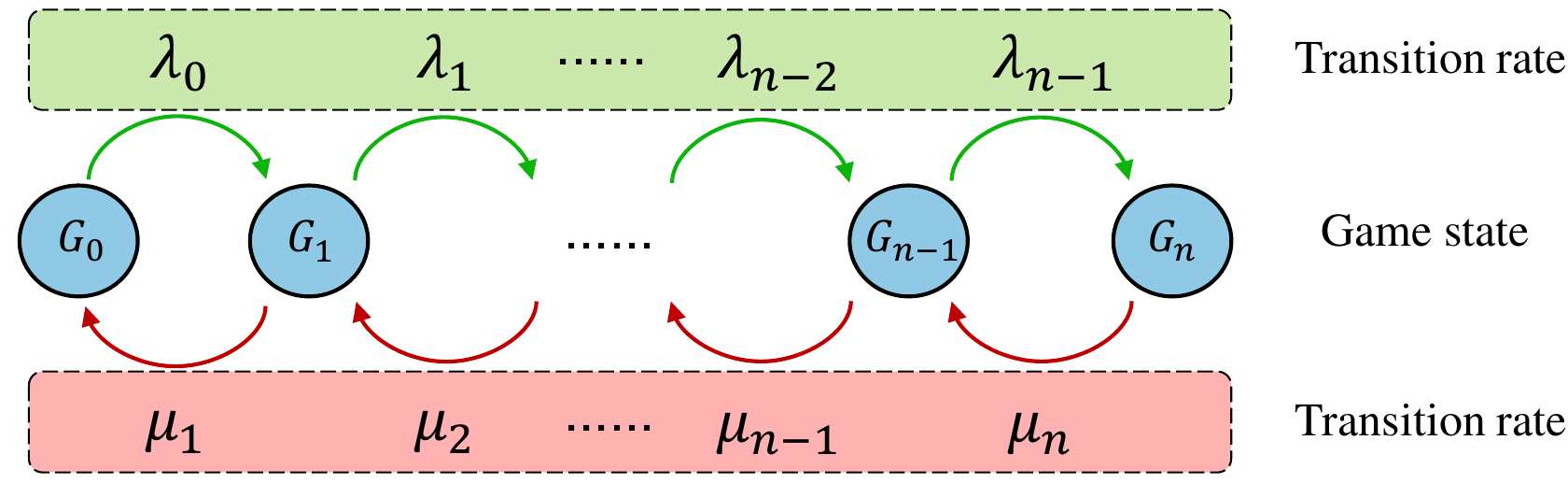}
\caption{\textbf{Game state transitions of individuals.} Each individual in the network will have the game state transition, the arrows signify the transitions from one game state to another, where the letters $\lambda_i$ and $\mu_{i+1}$ above or below the arrows denote the rate of transition from game state $G_i$ to $G_{i+1}$ and the rate of transition from game state $G_{i+1}$ to $G_i$ ($i=0, 1, \cdots, n-1$), respectively.}
\label{game state transition}
\end{figure}
\end{center}

\subsection{The Game Model and Strategy Evolution}

In this subsection, we will explain the interaction between individuals, including the calculation of individuals' payoffs and the evolution of strategies. In each evolutionary time step, each individual will play a certain game with each neighbor to obtain a payoff, which is related to the individual's game state and different strategic interactions. Specifically, when individuals are staying at game state $G_i$, mutual cooperation (C, C) leads to a ``reward'' of $R_i$ for the individual, while mutual defection (D, D) brings a ``punishment'' outcome to an individual payoff of $P_i$; unilateral cooperation results in a ``sucker's payoff" of $S_i$ for the cooperator, while for the defector, an individual receives a ``temptation to defect'' of $T_i$. Thus, the payoff matrix for an individual playing the game $G_i$ can be expressed as below:

\begin{small}
\begin{equation}
\label{Payoff matrix}
M_i = \left(
  \begin{array}{cc}
    R_i & S_i \\
    T_i & P_i  \\
  \end{array}
\right).
\end{equation}
\end{small}

We note that different orders of parameters ($R_i$, $S_i$, $T_i$, and $P_i$) will yield different social dilemmas as long as the four conditions ($R_i>P_i$, $R_i>S_i$, $2R_i>T_i+S_i$, and $T_i>R_i$ or $P_i>S_i$) are satisfied, which clearly portray the conflict between individual and collective benefits in social dilemmas. In particular, this conflict is manifested as follows: from a collective perspective, a combination of co-cooperation (C, C) strategies is better than unilateral cooperation (C, D) or unilateral defection (D, C); from an individual perspective, co-cooperation (C, C) is better than co-defection (D, D). However, under the effect of the individual's greed characteristic ($T_i > R_i$) or the fear of the game opponent's choice of defective strategy ($P_i > S_i$), the individual chooses the defective strategy, thus producing a result that conflicts with the collective benefits.

In order to better coincide with reality, we also take the reputation characteristics of the individuals into account. At the beginning of the evolution, each individual $j$ will be given a random number to represent his/her own reputation value $Re_j$. In addition, the individuals' reputations will evolve over time as well as their strategies and are related to the individuals' behaviors. The reputation possessed by the individual is evaluated by the third-party assessment system, which is similar to the credit agency in reality. Therefore, the reputation information about the individual is open to members of the public group, i.e., all individuals in the network can be aware of the reputations of their neighbors. Moreover, the reputations of individuals at time step $t + 1$ depend on their behavior at the time step $t$. The reputation of individual $j$ will increase by $\delta$ if the individual is a cooperator in the previous step, otherwise, it will decrease by $\delta$ if the individual is a defector in the previous step, in which $\delta$ indicates the unit of reputation change. Consequently, the update rule of the reputation of individual $j$ can be denoted as follows:

\begin{small}
\begin{equation}
\label{repuation update}
Re_{j}^{t+1}=\begin{cases}
	Re_{j}^{t}+\delta, s_{j}^{t}=C\\
	Re_{j}^{t}-\delta, s_{j}^{t}=D\\
\end{cases},
\end{equation}
\end{small}
where $s_j^t$ represents the strategy adopted by individual $j$ at the time step $t$.

For the update rule of individual's strategy, consistent with most previous studies, we utilize the Fermi process, i.e., individuals will adopt the strategy of individuals with higher payoffs than themselves with a higher probability. However, different from previous studies, when the individuals select another individual for payoff comparison and update their strategies, they do not randomly choose from their neighbors, but choose the individual with a higher reputation with a higher probability, i.e., the probability of an individual $j$ to choose another individual $j\prime$ is determined by $\Pi_{j\prime}$,

\begin{small}
\begin{equation}
\label{probability of choosing individual}
\Pi_{j\prime}=\frac{Re_{j\prime}}{\sum_{r\in \varGamma_j}{Re_r}},
\end{equation}
\end{small}
where $\varGamma_j$ is the set of neighbors of individual $j$. For the update of strategy, if individual $j$ chooses neighbor $j\prime$, then $j$ will adopt $j\prime$'s strategy in the next time step with the following probability,

\begin{small}
\begin{equation}
\label{probability of updating strategy}
P(s_{j} \leftarrow s_{j\prime})=\frac{1}{1+e^{(U_{j}-U_{j\prime}) / \kappa}},
\end{equation}
\end{small}
where $s_j$ and $U_j$ indicate the strategy and payoff of individual $j$, respectively, and $\kappa$ represents the noise factor, which is employed to portray the irrational choices of individuals in the game.

\subsection{Definitions and Theoretical Analysis}

In this subsection, we hereby give some required notations, definitions, and theoretical analysis for the expectation of the number of individuals playing a certain game state in the network when the evolution is stable. As stated above, the game state of each individual will change as time progresses, which will result in a change in the number of individuals playing a certain game in the network. Therefore, we give the definitions as follows:

\newtheorem{Def}{Definition}

\begin{Def}
\label{N(t)}
$\{N_i(t), t\geq0\}$ denotes a stochastic process of the number of individuals staying in the game state $G_i$ with the state space $\Omega_i=\{0, 1, \cdots, N\}$, where $N$ represents the scale of the network.
\end{Def}

In Def. \ref{N(t)}, the state space $\Omega_i$ describes the possible values of individuals in the network that stay in game state $G_i$ at each moment, where it takes the largest value of $N$, which means that all individuals in the network are playing game $G_i$ at this time, while the number of individuals in other game states equals to 0. Next, we define the probability of transferring the game state $G_x$ to $G_y$ of individual $j$ as below:

\begin{Def}
\label{transition probability}
$p_{x, y}^j(\bigtriangleup h)$ denotes the probability that the game state conducted by the individual $j$ is $G_x$ and will turn to $G_y$ in the time interval $\bigtriangleup h$, which can be represented as a conditional probability,

\begin{small}
\begin{equation}
\label{conditional probability}
p_{x, y}^{j}(\bigtriangleup h)=P\{X_j(t+\bigtriangleup h)=G_y|X_j(t)=G_x\},
\end{equation}
\end{small}
where $G_x$ and $G_y \in$ $\{G_0, G_1, \cdots, G_n\}$ as the game state space and $X_j(t)$ represents the game state conducted by the individual $j$ at time $t$.
\end{Def}

According to Def. \ref{transition probability}, we give the following definition of the rate that the game state conducted by the individual $j$ is $G_x$ and will next transfer to $G_y$:

\begin{Def}
\label{transition rate definition}
$q_{x, y}^j$ denotes the transition rate of different game states corresponding to the transition probability $p_{x, y}^j(\bigtriangleup h)$, which can be expressed as follows:

\begin{small}
\begin{equation}
\label{transition rate equation}
q_{x,y}^{j}=\begin{cases}
	\underset{t\rightarrow 0^+}{\lim}\frac{p_{x,y}^{j}\left( t \right)}{t}, x\ne y\\
	\underset{t\rightarrow 0^+}{\lim}\frac{1-p_{x,y}^{j}\left( t \right)}{t}, x=y\\
\end{cases},
\end{equation}
\end{small}
where $G_x$ and $G_y \in$ $\{G_0, G_1, \cdots, G_n\}$ as the game state space.
\end{Def}

Defs. \ref{transition probability} and \ref{transition rate definition} describe the transition probability and transition rate, respectively, for the game state of the individual $j$ in the network turning from $G_x$ to $G_y$. Then, we present the limit probability of the individual $j$ conducting the game state $G_y$ as follows:

\begin{Def}
\label{limiting probability definition}
The probability $p_y^j(t)$ that the individual $j$ conducts the game state $G_y$ at time $t$ is denoted as below:

\begin{small}
\begin{equation}
\label{probability distribution}
p_y^j(t)=P\{X_j(t)=G_y\},
\end{equation}
\end{small}
where $G_y \in$ $\{G_0, G_1, \cdots, G_n\}$ as the game state space. Besides, we have $p_y^j$ denote the limit probability as follows:

\begin{small}
\begin{equation}
\label{limiting probability equation}
p_{y}^{j}=\pi_{y}^{j}=\underset{t\rightarrow \infty}{\lim}p_{y}^{j}(t),
\end{equation}
\end{small}
if the probability converges to some values when $t\rightarrow \infty$.
\end{Def}

In Def. \ref{limiting probability definition}, we depict the steady-state probability or the stationary probability distribution of the game state conducted by individual $j$ in the network, which means that we can get the probability that the game state of individual $j$ is $G_y$ when $t\rightarrow \infty$.

Notably, the input process for the game states $G_0$ and $G_n$ is a single Poisson process with variable rates since the input rate is only related to the number of individuals currently in $G_1$ and $G_{n-1}$, respectively. However, the input for game state $G_i$ ( $1\leq i\leq n-1$) is a component of the input of individuals from the game states $G_{i-1}$ and $G_{i+1}$, which indicates that it is a compound Poisson flow.

In order to prove the stationarity of the number of individuals performing a certain game state in the network, we first carry out two lemmas. In the game state space $\{G_0, G_1, \cdots, G_n\}$, we have the following lemmas to hold:

\newtheorem{lem}{Lemma}
\begin{lem}
\label{Lemma 1}
Supposing that the limit probability $p_{y}^{j}$ exists, then we have
\begin{small}
\begin{equation}
\underset{t\rightarrow \infty}{\lim}p_{y}^{j}(t)^{\prime}=0.
\end{equation}
\end{small}
\end{lem}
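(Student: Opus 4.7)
The plan is to combine the Kolmogorov forward equation for the finite-state Markov chain $\{X_j(t)\}$ with a one-line real-analysis fact. The chain lives on the finite space $\{G_0,\dots,G_n\}$ with bounded transition rates $q_{x,y}^j$ as defined in Def.~\ref{transition rate definition}, so each marginal $p_y^j(t)$ is smooth and satisfies
\begin{small}
\begin{equation}
\frac{d}{dt} p_y^j(t) \;=\; \sum_{x=0}^{n} q_{x,y}^{j}\, p_x^j(t).
\end{equation}
\end{small}
Thus the derivative is a fixed, finite linear combination of the marginal probabilities with constant coefficients.

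Next I would use the existence of the limit probabilities: in the regime considered by Def.~\ref{limiting probability definition}, $p_x^j(t)\to \pi_x^j$ as $t\to\infty$ for every state $x$. Plugging this into the forward equation, the finite sum converges coordinate-wise, so
\begin{small}
\begin{equation}
\lim_{t\to\infty} \frac{d}{dt} p_y^j(t) \;=\; \sum_{x=0}^{n} q_{x,y}^{j}\, \pi_x^j \;=:\; L,
\end{equation}
\end{small}
a finite real number. In other words, the derivative is not just bounded but actually convergent.

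Finally I invoke the elementary fact: if $f:[0,\infty)\to\mathbb{R}$ is differentiable, $\lim_{t\to\infty}f(t)$ is finite, and $\lim_{t\to\infty}f'(t)$ exists, then this latter limit is $0$. The argument is one line: if $L>0$ then $f'(t)\ge L/2$ for all sufficiently large $t$, forcing $f(t)\to\infty$ and contradicting convergence; the case $L<0$ is symmetric. Applied to $f = p_y^j$, this yields $\lim_{t\to\infty} (p_y^j)'(t)=0$.

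The only delicate point is the upgrade from boundedness of $(p_y^j)'(t)$ to its \emph{convergence}, which is what makes the closing calculus lemma applicable. This is immediate here because the forward equation is linear with constant coefficients and the state space is finite, so convergence of each $p_x^j(t)$ transfers to convergence of the sum. If one did not want to assume convergence of every $p_x^j(t)$, an alternative route is Barbalat's lemma: $(p_y^j)'(t)$ is Lipschitz (differentiating the forward equation once more shows $(p_y^j)''(t)$ is bounded), its integral $p_y^j(t)-p_y^j(0)$ has a finite limit, and any uniformly continuous function with a convergent integral must tend to zero. Either route gives the claim without any hard estimates.
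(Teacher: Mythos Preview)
Your argument is correct, and its final step is exactly the paper's proof: assume the limit of the derivative is some $s\neq 0$, integrate, and contradict the boundedness of $p_y^j(t)$. Where you go beyond the paper is in first \emph{establishing} that $(p_y^j)'(t)$ has a limit at all, by writing it via the Kolmogorov forward equation as a finite constant-coefficient combination of the marginals $p_x^j(t)$ and passing to the limit. The paper simply posits $\lim_{t\to\infty}(p_y^j)'(t)=s>0$ for the contradiction and never addresses existence of that limit (nor the $s<0$ case), so your version is strictly more complete. Your Barbalat-lemma alternative is a genuinely distinct route that avoids assuming convergence of the other marginals; the paper does not use or mention it.
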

\begin{proof}
Assume that there exists game state $G_x \in \{G_0, G_1, \cdots, G_n\}$, which enables $\underset{t\rightarrow \infty}{\lim}p_{x}^{j}(t)^{\prime}=s>0$ to satisfy. Therefore, through the definition of limitation, there exists a $t_1$, for arbitrary $\epsilon$ and those $t\geq t_1$ that makes $\left| p_{x}^{j}(t)^{\prime}-s \right| <\epsilon$ satisfy, which suggests that $p_{x}^{j}(t)^{\prime}>s-\epsilon$. Thus, we can derive the following limitation:
\begin{small}
\begin{equation}
\underset{t\rightarrow \infty}{\lim}p_{x}^{j}(t)>p_x^j\left( t_1 \right) +\underset{t\rightarrow \infty}{\lim}\left( s-\epsilon \right) \left( t-t_1 \right) =\infty,
\end{equation}
\end{small}
which contradicts the definition of probability that needs to belong to $[0, 1]$. Therefore, our previous assumption does not hold, i.e., $\underset{t\rightarrow \infty}{\lim}p_{y}^{j}(t)^{\prime}=0$.

The result follows.
\end{proof}

Lemma \ref{Lemma 1} shows that if we derivate the probability of an individual staying in the game state $G_y$ at $t\rightarrow \infty$, then the derivative is 0, and this lemma is employed in subsequent proofs of the existence of stationary distribution. According to Eq. \ref{conditional probability}, we know that the transition probability of the game state of an individual is independent of the starting time $t$ and is only related to the time interval $\bigtriangleup h$, which indicates that the Markov chain is homogeneous. Besides, for any individual $j$ in any state $G_x, G_y \in \{G_0, G_1, \cdots, G_n\}$, there is always a $t_1$ such that $p_{x,y}^{j}(t_1)>0$ follows; at the same time, there is also a certain $t_2$ following $p_{y,x}^{j}(t_2)>0$ satisfies, which means that all game states are communicated, i.e., the condition of irreducibility follows. Additionally, again according to Eq. \ref{conditional probability}, we have $\underset{t\rightarrow 0^+}{\lim}p_{y,y}^{j}\left( t \right) = 1$, while $\underset{t\rightarrow 0^+}{\lim}p_{x,y\ne x}^{j}\left( t \right) = 0$, namely
\begin{small}
\begin{equation}
\underset{t\rightarrow 0^+}{\lim}p_{x,y}^{j}\left( t \right) =\begin{cases}
	0,x\ne y\\
	1,x=y\\
\end{cases},
\end{equation}
\end{small}
which represents that the Markov chain is continuous. Briefly, the Markov chain $\{X_j(t), t\geq0\}$ of individual game state is homogeneous, irreducible, and continuous, which enables us to derive the existence of its limit distribution.

\begin{lem}
\label{Lemma 2}
For the homogeneous, irreducible, and continuous Markov chain $\{X_j(t), t\geq0\}$ with the state space $E_j$, its stationary distribution $\{\pi_y^j, G_y\in E_j\}$ exists and follows
\begin{small}
\begin{equation}
\label{stationarity equation}
-\pi _{y}^{j}q_{y}^{j}+\sum_{G_k\ne G_y\in E_j}{\pi_{k}^{j}q_{k,y}^{j}}=0,
\end{equation}
\end{small}
where $q_{y}^{j}$ indicates the transition rate of individual $j$ from the game state $G_y$ to other game states.
\end{lem}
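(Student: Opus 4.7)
The plan is to derive the Kolmogorov forward equation for $p_y^j(t)$ from the Chapman--Kolmogorov identity, and then let $t\to\infty$: Lemma~\ref{Lemma 1} kills the derivative on the left-hand side, and the right-hand side converges to the claimed balance expression. Existence of $\pi_y^j$ itself I would justify by appealing directly to the paragraph preceding the lemma, where the chain $\{X_j(t),t\ge 0\}$ was already established to be homogeneous, irreducible, and continuous on the finite space $E_j$; by standard classification of continuous-time Markov chains, this guarantees a unique limiting distribution, which is exactly what Def.~\ref{limiting probability definition} calls $\pi_y^j$.

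First I would apply Chapman--Kolmogorov to split $p_y^j(t+\bigtriangleup h)$ according to the state occupied at time $t$:
\begin{small}
\begin{equation}
p_y^j(t+\bigtriangleup h)=\sum_{G_k\in E_j}p_k^j(t)\,p_{k,y}^j(\bigtriangleup h).
\end{equation}
\end{small}
Peeling off the $k=y$ term and dividing through by $\bigtriangleup h$ gives
\begin{small}
\begin{equation}
\frac{p_y^j(t+\bigtriangleup h)-p_y^j(t)}{\bigtriangleup h}=-p_y^j(t)\,\frac{1-p_{y,y}^j(\bigtriangleup h)}{\bigtriangleup h}+\sum_{G_k\ne G_y}p_k^j(t)\,\frac{p_{k,y}^j(\bigtriangleup h)}{\bigtriangleup h}.
\end{equation}
\end{small}
Now I would take $\bigtriangleup h\to 0^+$ and use Def.~\ref{transition rate definition}: the first coefficient becomes $q_y^j$ (with $q_y^j=q_{y,y}^j$ in the notation of the lemma), and each coefficient in the sum becomes $q_{k,y}^j$. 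This yields the forward equation
\begin{small}
\begin{equation}
\frac{d p_y^j(t)}{dt}=-p_y^j(t)\,q_y^j+\sum_{G_k\ne G_y\in E_j}p_k^j(t)\,q_{k,y}^j.
\end{equation}
\end{small}

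Finally I would let $t\to\infty$ on both sides. By Def.~\ref{limiting probability definition} each $p_k^j(t)\to\pi_k^j$, so the right-hand side tends to $-\pi_y^j q_y^j+\sum_{G_k\ne G_y}\pi_k^j q_{k,y}^j$; by Lemma~\ref{Lemma 1} the left-hand side tends to $0$, which produces exactly Eq.~\eqref{stationarity equation}. The main obstacle I expect is purely technical: one must justify (a) the termwise passage to the limit inside the finite sum over $E_j$ (immediate since $E_j$ is finite) and, more delicately, (b) the interchange of the $\bigtriangleup h\to 0^+$ limit with the summation when deriving the forward equation. For the latter I would note that on the finite state space $\{G_0,\ldots,G_n\}$ the sum has finitely many terms, so the interchange is automatic; if one wished to extend the argument to a countable state space, it would require a uniform bound on $p_{k,y}^j(\bigtriangleup h)/\bigtriangleup h$ and an application of dominated convergence, but this is not needed here.
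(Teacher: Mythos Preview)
Your proposal is correct and is precisely the argument the paper has in mind. The paper does not spell out a proof of Lemma~\ref{Lemma 2}; it only states the lemma and then remarks that ``the order of limit and summation can be swapped in the proof of Lemma~\ref{Lemma 2} \ldots\ they are valid in most models, including the birth-death process and all finite state models,'' which is exactly the interchange issue you isolate in step~(b), and Lemma~\ref{Lemma 1} is set up specifically to kill $\lim_{t\to\infty}p_y^j(t)'$ in the forward equation as you do. Your write-up is in fact more explicit than what the paper provides.
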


Remarkably, we suppose that the order of limit and summation can be swapped in the proof of Lemma \ref{Lemma 2}, but this does not always satisfy. However, they are valid in most models, including the birth-death process and all finite state models. Moreover, based on Eq. \ref{stationarity equation}, we can equivalently translate it to $\pi _{y}^{j}q_{y}^{j}=\sum_{G_k\ne G_y\in E_j}{\pi _{k}^{j}q_{k,y}^{j}}$, where $\pi _{y}^{j}q_{y}^{j}$ can be understood as the rate at which the process leaves game state $G_y$ and $\sum_{G_k\ne G_y\in E_j}{\pi _{k}^{j}q_{k,y}^{j}}$ can be interpreted as the rate at which the process enters game state $G_y$. Therefore, in other words, Eq. \ref{stationarity equation} is also a statement that the rate at which a process enters and leaves game state $G_y$ is equal, from which we can get the stationary distribution of the system.

Subsequently, we perform the stationary distribution of the stochastic process $X_j(t)$ based on Lemma \ref{Lemma 2}.

\newtheorem{thm}{Theorem}

\begin{thm} \label{limiting thm}
In our proposed model, for the stochastic process $X_j(t)$ of individual game state, let $t\rightarrow \infty$, its stationary distribution exists, and follows
\begin{small}
\begin{equation}
\pi _{0}^{j}=\frac{1}{1+\sum_{r=1}^n{\frac{\lambda _0\lambda _1\cdots \lambda _{r-1}}{\mu _1\mu _2\cdots \mu _r}}}
\end{equation}
\end{small}
and
\begin{small}
\begin{equation}
\pi _{k}^{j}=\frac{\lambda _0\lambda _1\cdots \lambda _{k-1}}{\mu _1\mu _2\cdots \mu _k}\pi _{0}^{j}, 1\le k\le n.
\end{equation}
\end{small}
\end{thm}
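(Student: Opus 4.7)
The plan is to apply Lemma \ref{Lemma 2} to the birth--death structure of the chain $\{X_j(t), t\geq 0\}$ depicted in Fig. \ref{game state transition} and then solve the resulting balance equations recursively. First, I would read off the nonzero transition rates: from state $G_i$ only the moves to $G_{i+1}$ (at rate $\lambda_i$, for $0\le i\le n-1$) and to $G_{i-1}$ (at rate $\mu_i$, for $1\le i\le n$) are possible, so the total exit rates are $q_0^j=\lambda_0$, $q_n^j=\mu_n$, and $q_k^j=\lambda_k+\mu_k$ for $1\le k\le n-1$. Substituting these into Eq. \ref{stationarity equation} yields three balance equations: the boundary equation $\lambda_0\pi_0^j=\mu_1\pi_1^j$ at $G_0$, the interior equations $(\lambda_k+\mu_k)\pi_k^j=\lambda_{k-1}\pi_{k-1}^j+\mu_{k+1}\pi_{k+1}^j$ for $1\le k\le n-1$, and the boundary equation $\mu_n\pi_n^j=\lambda_{n-1}\pi_{n-1}^j$ at $G_n$.

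Next, I would establish the single-step detailed-balance-type relations $\lambda_k\pi_k^j=\mu_{k+1}\pi_{k+1}^j$ for every $0\le k\le n-1$ by induction on $k$. The base case $k=0$ is the boundary equation at $G_0$. For the inductive step, I would rewrite the interior equation at $G_k$ as $\mu_{k+1}\pi_{k+1}^j-\lambda_k\pi_k^j=\mu_k\pi_k^j-\lambda_{k-1}\pi_{k-1}^j$ and observe that the right-hand side vanishes by the inductive hypothesis. Telescoping these cut equations then gives $\pi_k^j=\frac{\lambda_0\lambda_1\cdots\lambda_{k-1}}{\mu_1\mu_2\cdots\mu_k}\pi_0^j$ for $1\le k\le n$, which is the second displayed formula of the theorem.

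Finally, I would close the system using the normalization $\sum_{k=0}^{n}\pi_k^j=1$, which is available because $\{\pi_k^j\}$ is a genuine probability distribution (as guaranteed by the finiteness of $E_j$ and the setup of Lemma \ref{Lemma 2}). Factoring $\pi_0^j$ out of the sum and solving produces the first displayed formula $\pi_0^j=\bigl(1+\sum_{r=1}^{n}\frac{\lambda_0\cdots\lambda_{r-1}}{\mu_1\cdots\mu_r}\bigr)^{-1}$.

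The main obstacle is not algebraic but conceptual: recognizing that the three-term interior balance equations collapse into two-term cut equations because the chain is birth--death. Once this telescoping observation is in hand, the remainder of the argument is bookkeeping. I would also briefly justify the interchange of limit and summation that Lemma \ref{Lemma 2} relies on; here this is automatic because $|E_j|=n+1$ is finite, a case the paper itself explicitly flags as admissible.
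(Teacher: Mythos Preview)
Your argument is correct and is the standard derivation of the stationary distribution for a finite birth--death chain: write the global balance equations supplied by Lemma~\ref{Lemma 2}, collapse them to the local cut equations $\lambda_k\pi_k^j=\mu_{k+1}\pi_{k+1}^j$ by induction, telescope, and normalize. Every step is sound, including your remark that the finiteness of $E_j$ makes the limit/sum interchange in Lemma~\ref{Lemma 2} unproblematic.

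As for comparison: the paper does not actually include a proof of Theorem~\ref{limiting thm}. The theorem is stated immediately after Lemma~\ref{Lemma 2} and is followed only by a sentence of commentary (``According to Thm.~\ref{limiting thm}, we can calculate\ldots'') before moving on to Theorem~\ref{expected thm}. The authors evidently regard the result as a direct consequence of the balance system in Eq.~\ref{stationarity equation} and leave the computation to the reader. Your proposal supplies precisely the omitted computation, in the way the paper's setup (birth--death transitions in Fig.~\ref{game state transition}, balance equations in Lemma~\ref{Lemma 2}) clearly anticipates, so there is no methodological divergence to discuss.
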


According to Thm. \ref{limiting thm}, we can calculate the probability that the game state of individual $j$ staying in is $G_k$ when $t\rightarrow \infty$, i.e., the stationary distribution of the stochastic process $X_j(t)$, and this theorem can further yield the expected number of individuals conducting each game state in the network.

\begin{thm}  \label{expected thm}

For the game state $G_k$, the expected number of individuals in the network staying in it is
\begin{small}
\begin{equation}
\label{expected number0}
E\left[ G_0 \right] =\frac{N}{1+\sum_{r=1}^n{\frac{\lambda _0\lambda _1\cdots \lambda _{r-1}}{\mu _1\mu _2\cdots \mu _r}}}
\end{equation}
\end{small}
and
\begin{small}
\begin{equation}
E\left[ G_k \right] =\frac{N\lambda _0\lambda _1\cdots \lambda _{k-1}}{\mu _1\mu _2\cdots \mu _k\left( 1+\sum_{r=1}^n{\frac{\lambda _0\lambda _1\cdots \lambda _{r-1}}{\mu _1\mu _2\cdots \mu _r}} \right)},1\le k\le n,
\end{equation}
\end{small}
where $N$ denotes the scale of the network.
\end{thm}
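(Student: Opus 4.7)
The plan is to derive Theorem \ref{expected thm} directly from Theorem \ref{limiting thm} by applying the linearity of expectation to indicator variables that register which state each individual occupies. For $j = 1, 2, \ldots, N$, define $I_k^j(t) := \mathbf{1}\{X_j(t) = G_k\}$, so that the total number of individuals in game state $G_k$ at time $t$ is $N_k(t) = \sum_{j=1}^N I_k^j(t)$. Taking expectations and using linearity (which does \emph{not} require the $I_k^j(t)$ to be independent) gives $E[N_k(t)] = \sum_{j=1}^N P\{X_j(t) = G_k\} = \sum_{j=1}^N p_k^j(t)$.

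Next I pass to the long-time regime $t \to \infty$. Because the exponential transition rates $\lambda_i$ and $\mu_{i+1}$ are identical for every individual, each process $\{X_j(t), t \geq 0\}$ is a homogeneous, irreducible, continuous-time Markov chain on the common state space $E_j = \{G_0, G_1, \ldots, G_n\}$, and Theorem \ref{limiting thm} applies uniformly to every $j$, yielding $\lim_{t \to \infty} p_k^j(t) = \pi_k^j$ with the explicit closed forms already recorded in its statement (note that $\pi_k^j$ in fact does not depend on $j$). Since each indicator is bounded by $1$ and the sum has only finitely many terms, the limit commutes with the finite summation, giving $E[G_k] := \lim_{t \to \infty} E[N_k(t)] = \sum_{j=1}^N \pi_k^j = N\pi_k^j$. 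Substituting the explicit formulas for $\pi_0^j$ and $\pi_k^j$ from Theorem \ref{limiting thm} reproduces precisely the two expressions for $E[G_0]$ and $E[G_k]$ claimed in the theorem.

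There is essentially no serious obstacle: the entire argument is a one-step lift from individual-level stationarity to a population-level count. The only subtlety worth flagging is that nothing about the joint law of $(X_1(t), \ldots, X_N(t))$ enters the argument, because linearity of expectation only sees marginals, and every marginal is controlled by Theorem \ref{limiting thm}. The closed-form algebra is then automatic, so the proof reduces to writing out the indicator decomposition, invoking Theorem \ref{limiting thm}, and substituting.
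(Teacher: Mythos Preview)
Your proposal is correct and follows essentially the same route as the paper: both arguments compute $E[G_k] = N\pi_k^j$ by summing the individual stationary probabilities supplied by Theorem~\ref{limiting thm}. The only noteworthy difference is that the paper explicitly invokes the i.i.d.\ assumption on the individual chains to justify the multiplication $N\times\pi_k^j$, whereas you observe (correctly) that linearity of expectation makes any independence assumption unnecessary and only the identical marginal law matters; this is a mild sharpening of the paper's argument but not a different method.
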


\begin{proof}
According to Thm. \ref{limiting thm}, we get the stationary distribution of the game state for each individual. And because the game state of each individual in the network is i.i.d. (independent and identically distributed), the expected number of individuals staying in game state $G_0$ can be obtained by multiplying the total number of individuals by the stationary distribution of individuals staying in game state $G_0$, i.e.,
\begin{small}
\begin{equation}
E\left[ G_0 \right] = N \times \pi _{0}^{j} = \frac{N}{1+\sum_{r=1}^n{\frac{\lambda _0\lambda _1\cdots \lambda _{r-1}}{\mu _1\mu _2\cdots \mu _r}}}.
\end{equation}
\end{small}

Analogously, the expected number of individuals staying in game state $G_k$ can be expressed as follows:

\begin{small}
\begin{equation}
\begin{aligned}
E\left[ G_k \right] &= N \times \pi _{k}^{j} = \frac{N\lambda _0\lambda _1\cdots \lambda _{k-1}}{\mu _1\mu _2\cdots \mu _k\left( 1+\sum_{r=1}^n{\frac{\lambda _0\lambda _1\cdots \lambda _{r-1}}{\mu _1\mu _2\cdots \mu _r}} \right)}, \\
&1\le k\le n.
\end{aligned}
\end{equation}
\end{small}

The results follow.

\end{proof}

In Thm. \ref{expected thm}, we obtain the expected number of individuals performing each game state in the network based on the stationary distribution of individuals staying in the game state in Thm. \ref{limiting thm}, which is only related to the network size and the rates in the game transitions.

\begin{center}
\begin{figure}[htbp]
\centering
\includegraphics[scale=0.25]{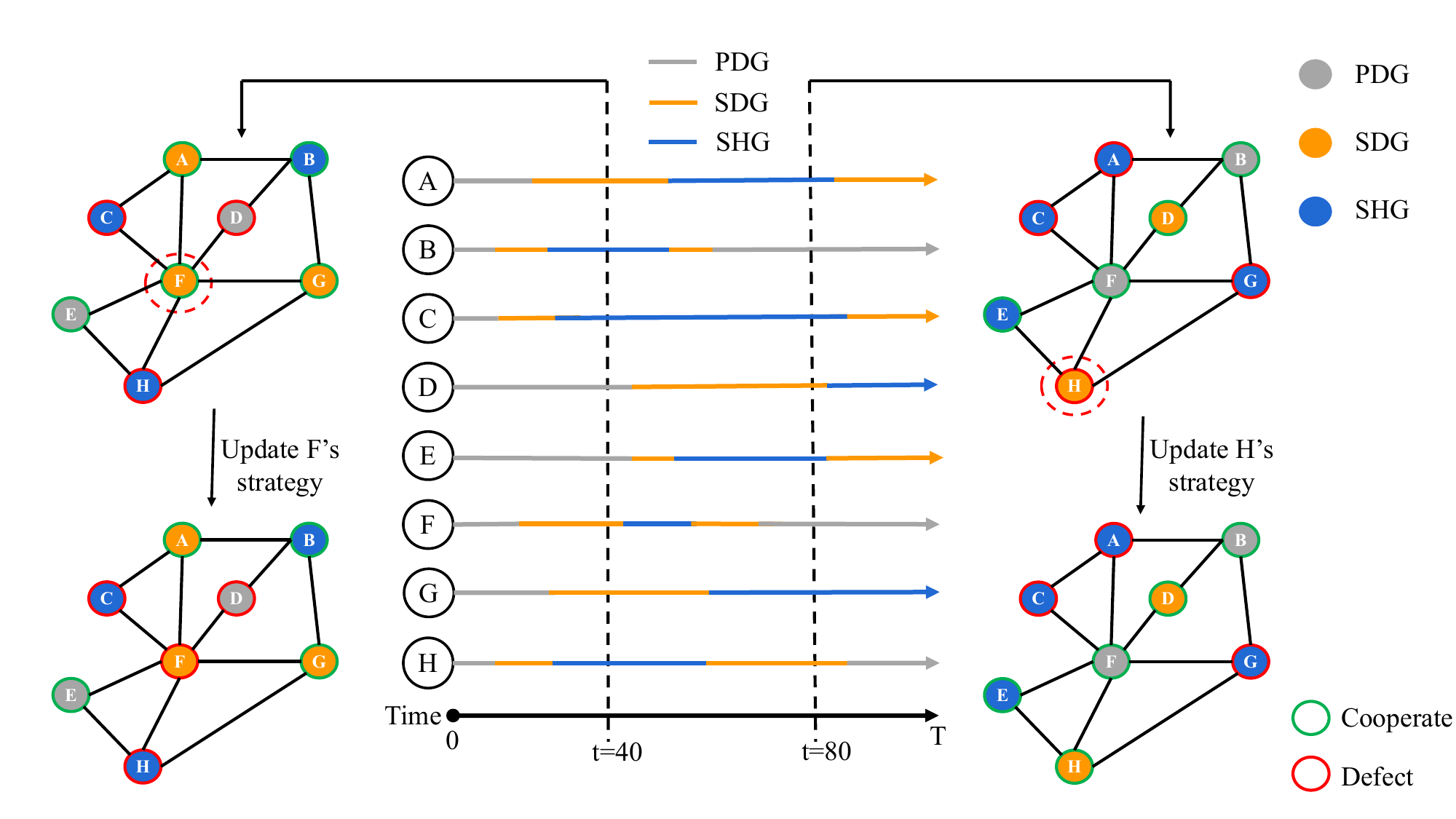}
\caption{\textbf{An illustration of the model.} Grey, orange, and blue durations represent the prisoner's dilemma game (PDG), the snowdrift game (SDG), and the stag-hunt game (SHG) durations of individuals, respectively. The red or green border indicates that the strategy adopted by the individual is defection or cooperation, respectively. The game state of each individual will be transformed from one to another at a specific rate. We chose evolution times $t = 40$ and $t = 80$ to observe snapshots of the network. The individual circled by the red dashed line updates his/her strategy at the next moment, and he/she will choose the more reputable individual among his/her neighbors with a higher probability to compare his/her payoff and decide whether to adopt the neighbor's strategy.}
\label{an example}
\end{figure}
\end{center}

\begin{center}
\begin{table*}[htbp]
\renewcommand{\arraystretch}{1.5}
\caption{Three different types of game states along with their payoff parameters and dilemma descriptions}
\begin{center}
\begin{tabular}{ccc}
\hline\noalign{\smallskip}
Game states & Parameters  & Dilemmas \\
\noalign{\smallskip}\hline\noalign{\smallskip}
Stag-hunt game & $R=1$\textgreater{}$T=r$\textgreater{}$P=0$\textgreater{}$S=-r$    & Players prefer mutual defection to unilateral cooperation. \\
Snowdrift game & $T=1+r$\textgreater{}$R=1$\textgreater{}$S=1-r$\textgreater{}$P=0$ & Players prefer unilateral defection to mutual cooperation.  \\
Weak prisoner's dilemma game & $T=b$\textgreater{}$R=1$\textgreater{}$P=S=0$      & Players prefer mutual defection to unilateral cooperation. \\
\noalign{\smallskip}\hline
\end{tabular}
\label{payoff parameter}
\end{center}
\end{table*}
\end{center}

Overall, in this section, we have described our model in detail, including the Markov chain of individual game states, the game model and strategy evolution, and some definitions and theoretical analysis. Besides, we give an example in Fig. \ref{an example} to make our model more explicit. There are eight individuals in the network, grey, orange, and blue represent individuals located in different game states, and the different colored borders of individuals indicate the different strategies taken by the individuals. We choose evolution times $t=40$ and $t=80$ to observe snapshots of the network and notice the strategy evolution of an individual, which is circled by the red dashed line. Specifically, both evolutionary game and game evolution are involved in this figure. A change in the border color indicates the updating of the strategy, i.e., a round of the evolutionary game, while a change in the internal color represents the updating of the game state, i.e., a round of game evolution. In Fig. \ref{an_update}, we show a time-axis representation of strategy updates and game transitions for individual F. The gray, orange, and blue colors on the time-axis indicate that the current game played by F are PDG, SDG, and SHG, respectively. The times $t_i (0 < i \leq 6)$ represent the moments when F updates strategy. It is important to note that an individual's game transition time is determined by a specific transition rate, while the timing of strategy updates can follow different rules. For instance, the time interval for strategy update can be fixed to a value or it can adhere to a certain distribution. As a result, the time scales of game transitions and strategy updates of individuals are different. In the following section, we will implement simulations to verify the validity of our model and analyze the evolution of cooperative behavior in the network.

\begin{center}
\begin{figure}[htbp]
\centering
\includegraphics[scale=0.25]{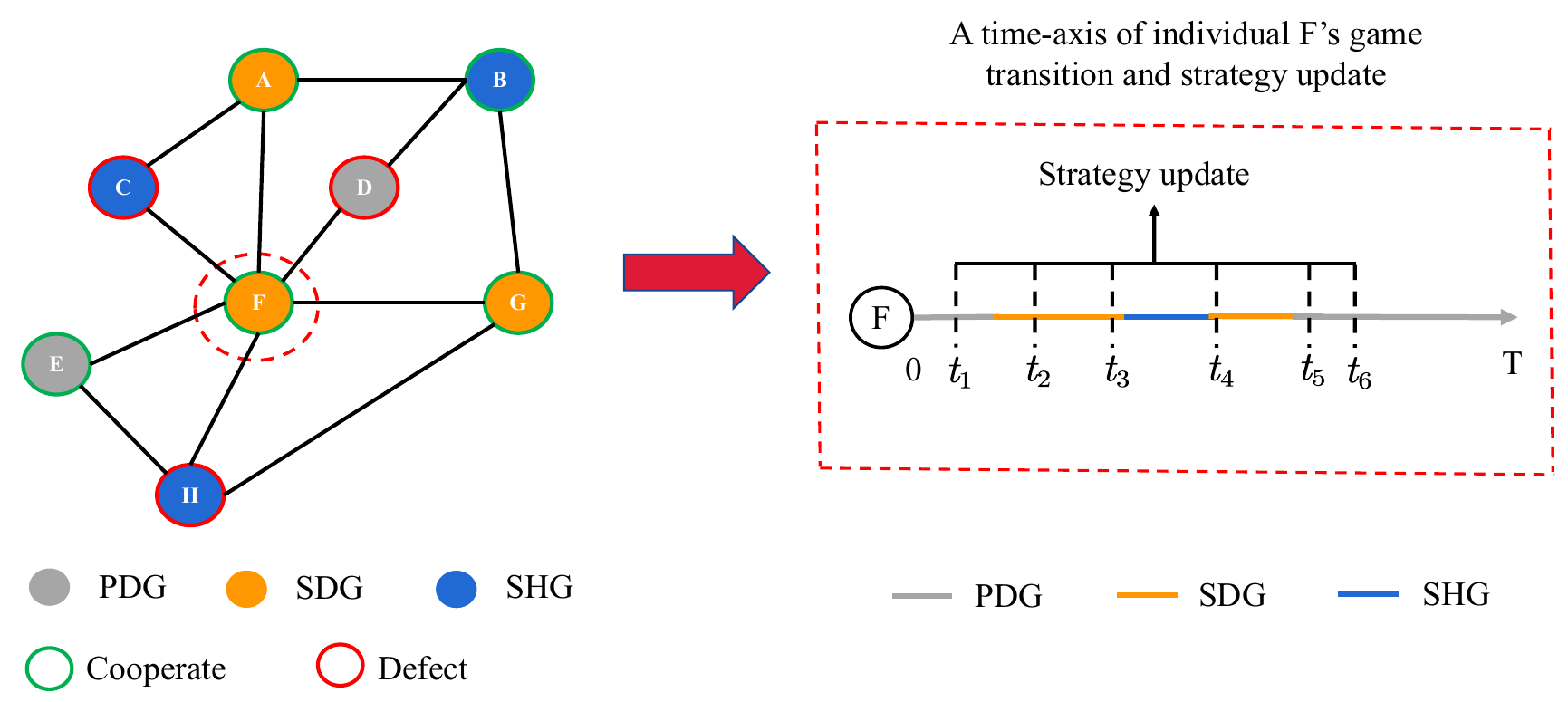}
\caption{\textbf{An example diagram of the evolutionary time of an individual's game state and strategy.} This figure illustrates the game transition and strategy update process for individual F. The game transition of individual F is determined by his/her specific transition rate, while the timing of strategy updates can follow different rules.}
\label{an_update}
\end{figure}
\end{center}

\begin{center}
\begin{figure*}[htbp]
\centering
\subfigure[initial transition rates]{
\includegraphics[scale=0.11]{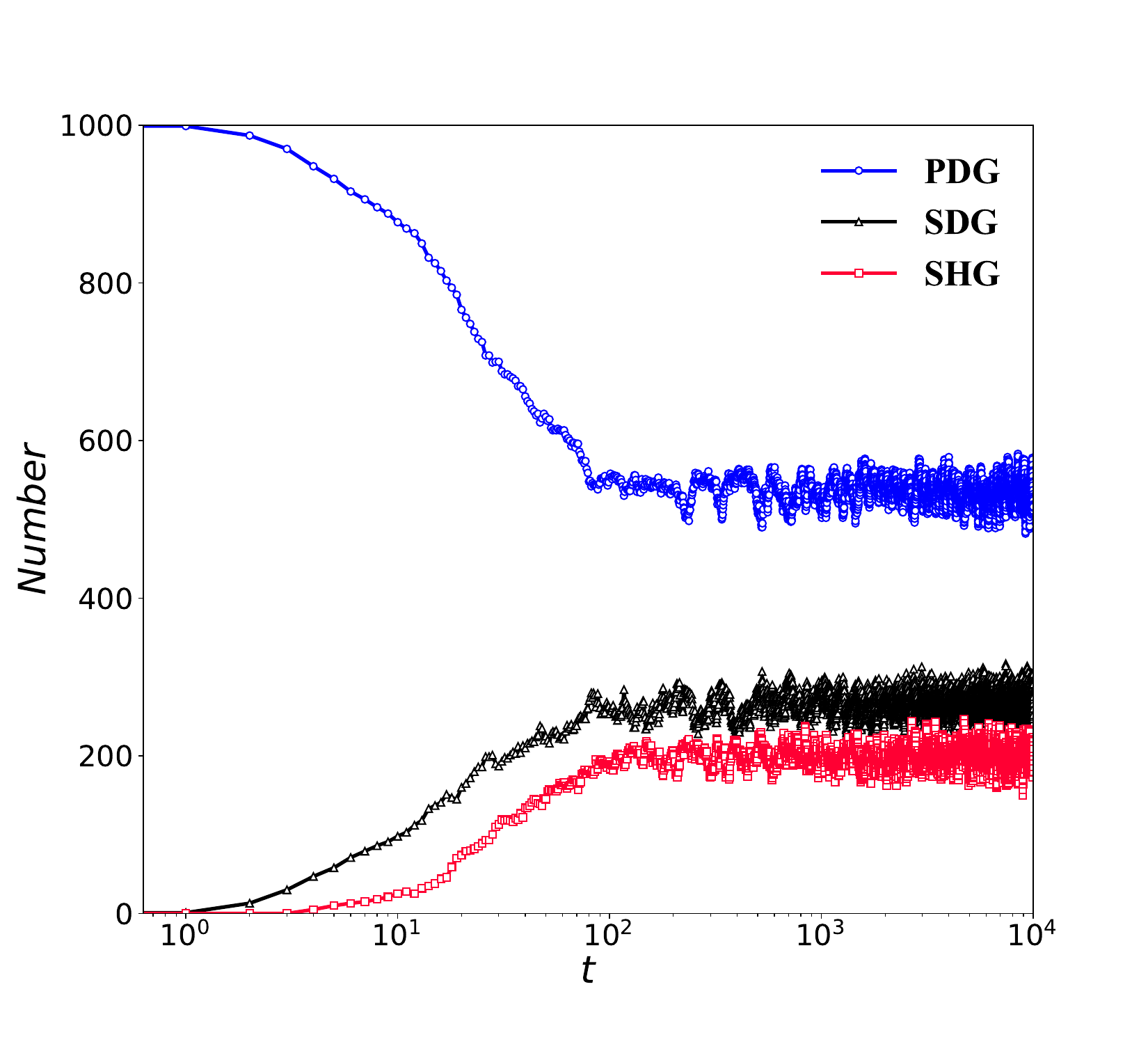}
\label{t_origin}}
\subfigure[$\lambda_0$=0.04]{
\includegraphics[scale=0.11]{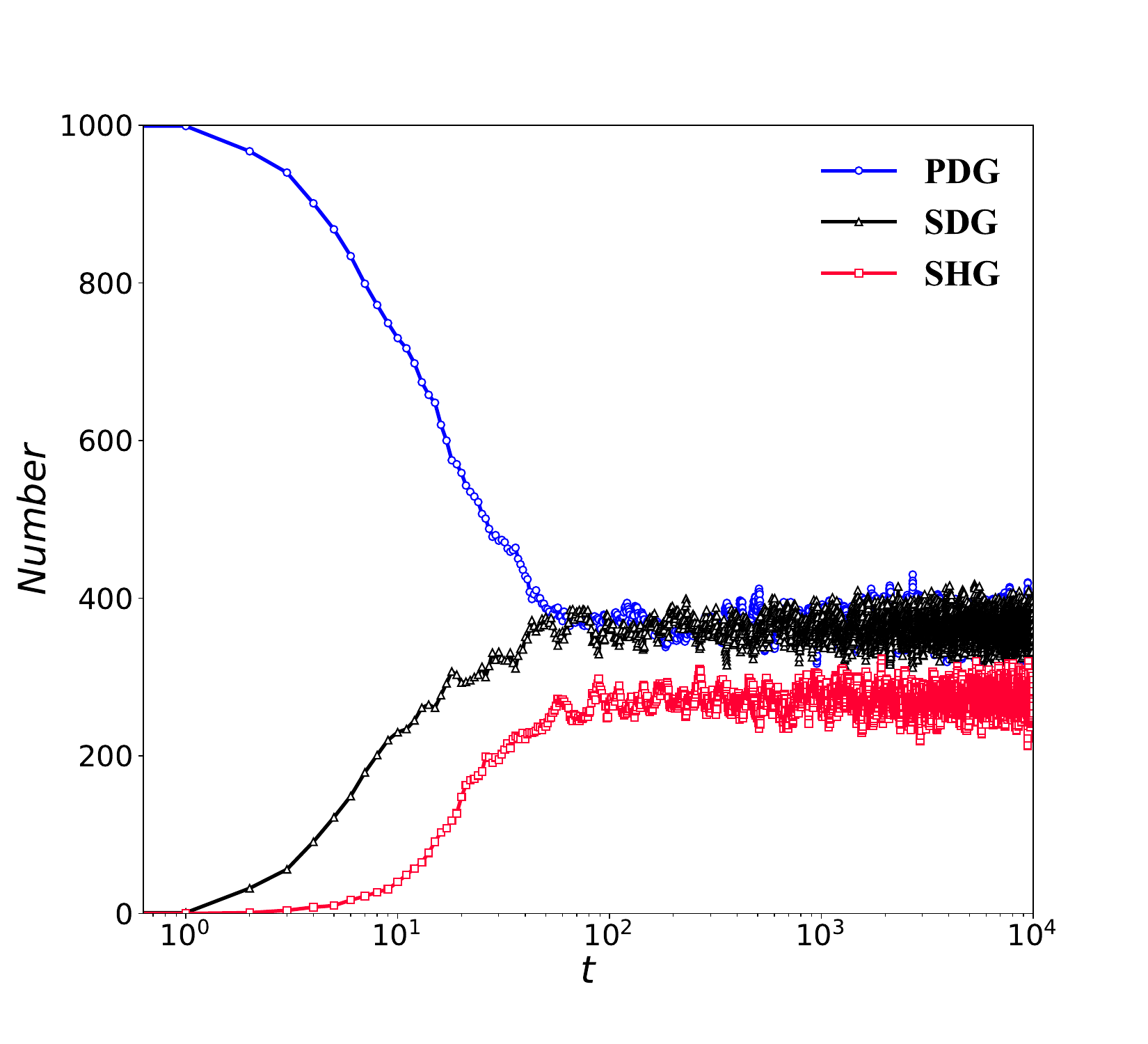}
\label{t_lam0}}
\subfigure[$\lambda_1$=0.12]{
\includegraphics[scale=0.11]{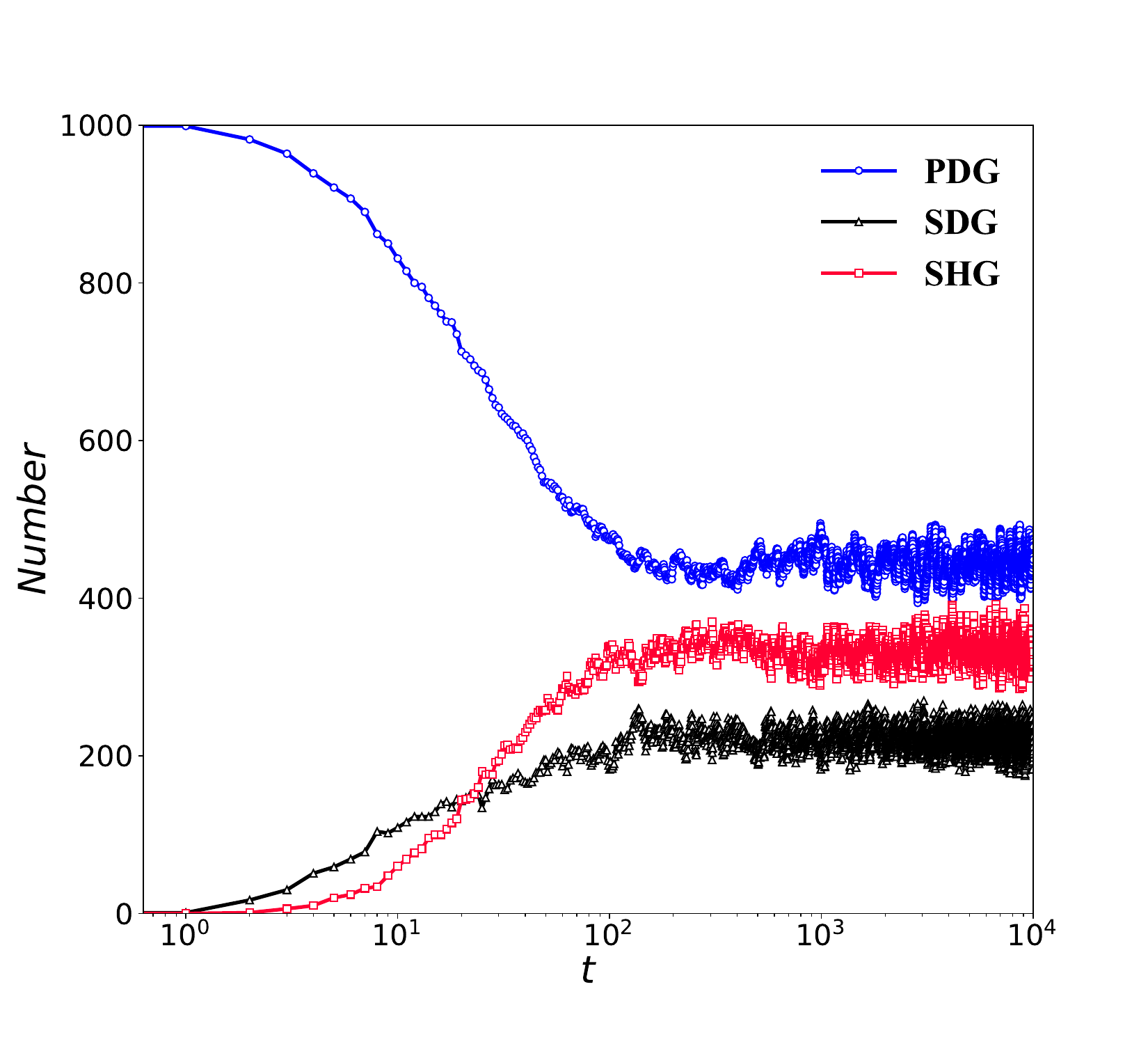}
\label{t_lam1}}
\subfigure[$\mu_1$=0.08]{
\includegraphics[scale=0.11]{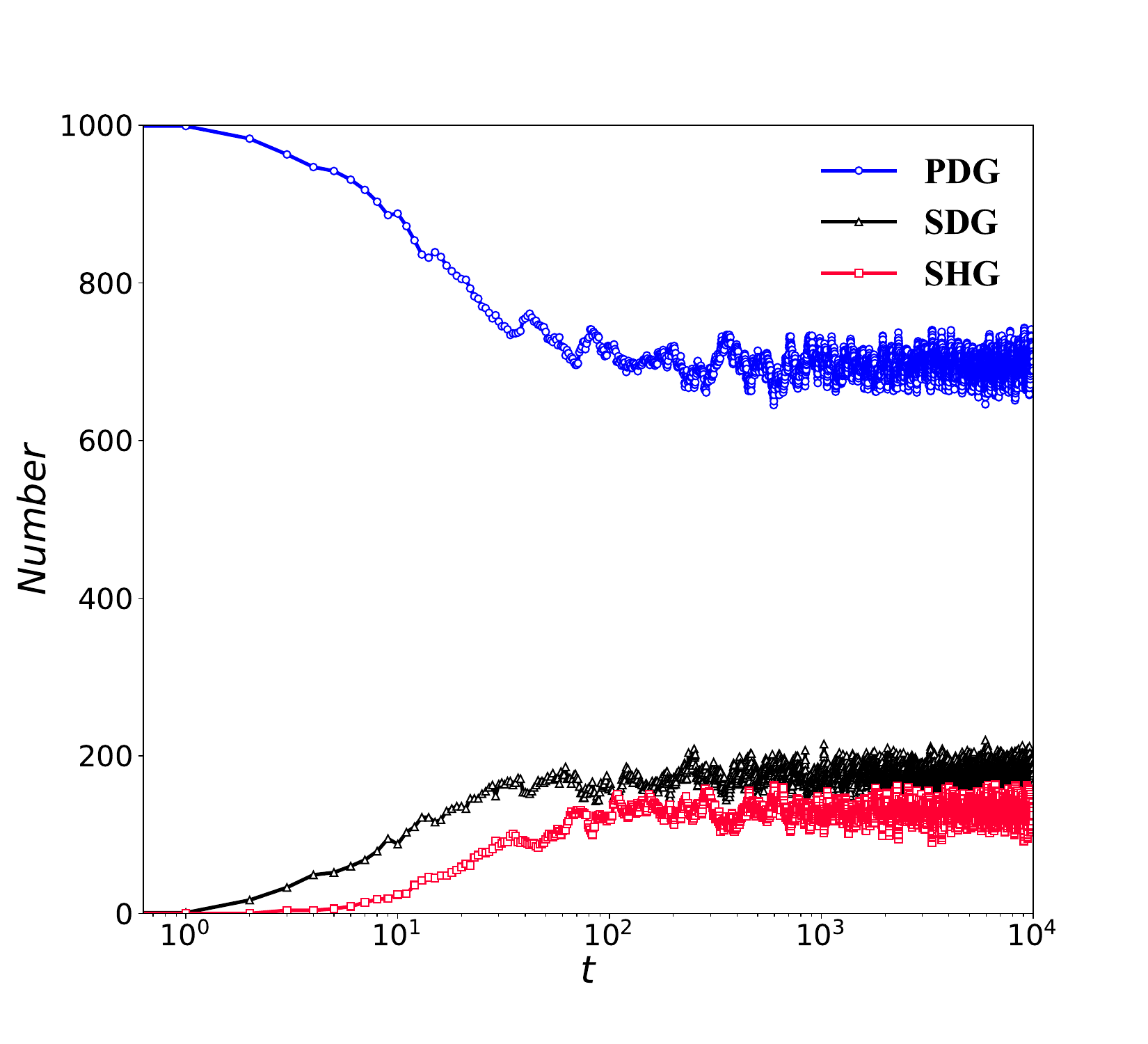}
\label{t_mu1}}
\subfigure[$\mu_2$=0.16]{
\includegraphics[scale=0.11]{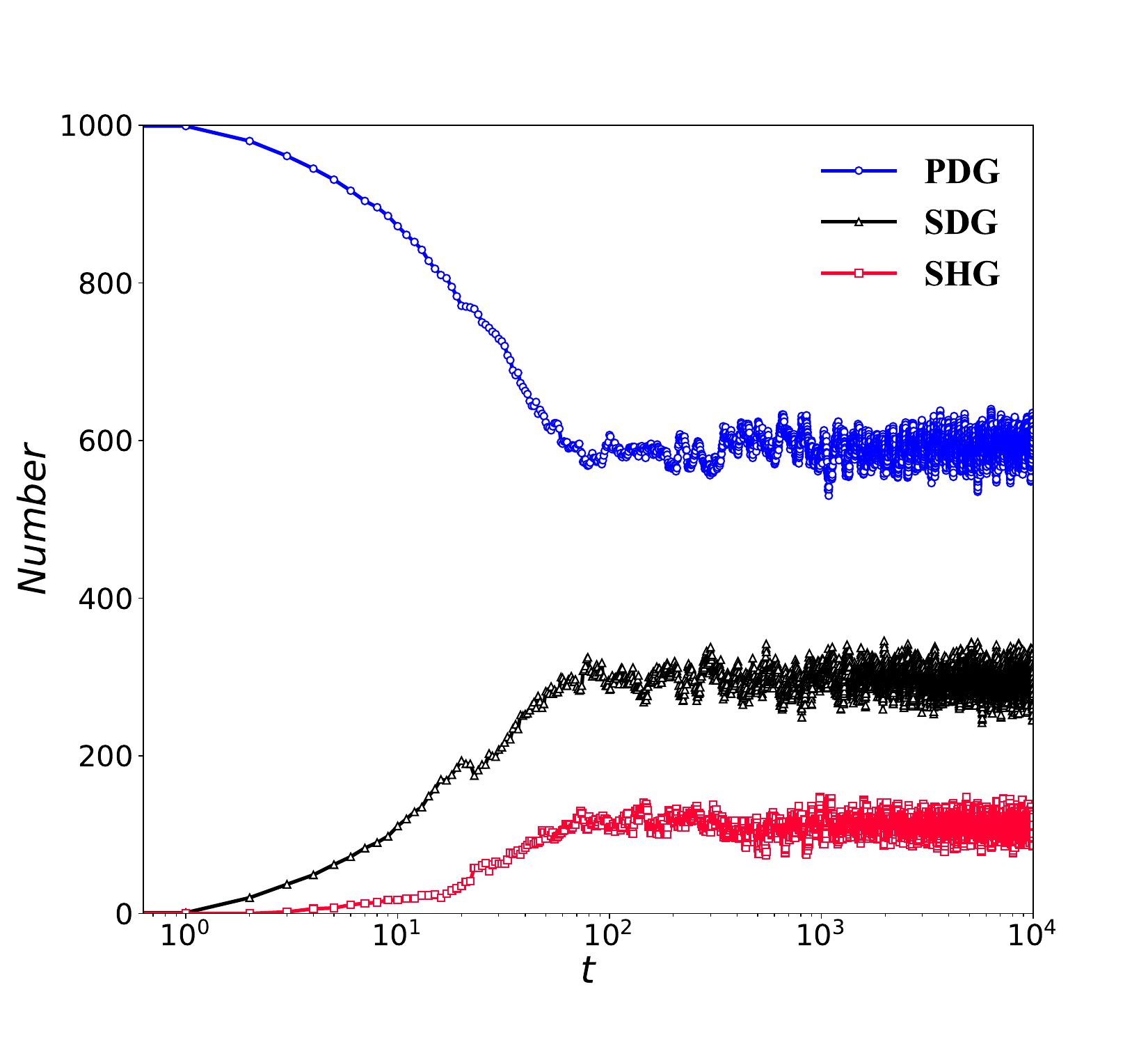}
\label{t_mu2}}
\caption{\textbf{Evolutionary curves of the individual number staying in different game states over time with different parameters.} (a) shows the evolution of the individual number obtained from the initial setting of the parameters with $\lambda_0$=0.02, $\lambda_1$=0.06, $\mu_1$=0.04, $\mu_2$=0.08, and the other subplots are acquired by adjusting a parameter by the control variable method on the initial parameter settings. Specifically, (b) illustrates the individual number with $\lambda_0$ changed to 0.04. (c) shows the individual number with $\lambda_1$ adjusted to 0.12. (d) is the individual number with $\mu_1$ changed to 0.08. And (e) demonstrates the individual number with $\mu_2$ changed to 0.16. As time progresses, the number of individuals in each of the three game states gradually becomes stable.}
\label{t_num}
\end{figure*}
\end{center}

\section{Simulation results and discussions}    \label{part III}

In this section, we will conduct some simulations to verify the model we proposed. In detail, we first illustrate the methods, which will be utilized in the following simulations. Then, we investigate the number of individuals staying in different game states in the second subsection. Subsequently, the evolution of cooperative behavior in the networks is investigated, including the influence of the transition rates between different game states, payoff parameters with and without reputation mechanism, different time scales of strategy updates, and network scale on the cooperation frequency.

\subsection{Methods}

Herein, we explain some methods for our subsequent simulations. For the setting of the payoff matrix in Eq. \ref{Payoff matrix}, as usual, we normalize the advantage of the total payoff of mutual cooperation over that of mutual defection to 2 in all types of social dilemma games by making the reward $R_i = 1$ and the punishment $P_i = 0$. Additionally, if an individual adopts the cooperative strategy, while the other one adopts the defective strategy, the cooperator and the defector receive the $S_i\in[-1, 1]$ and the $T_i\in[0,2]$, respectively. Therefore, based on the relative ordering of $R_i = 1$, $P_i = 0$, $S_i$, and $T_i$, four different types of games can be obtained by dividing the two-dimensional T-S parameter region: the harmony game (HG), the prisoner's dilemma game (PDG), the snowdrift game (SDG), and the stag-hunt game (SHG), of which only the latter three games are social dilemma games, while the HG is not since the dominant strategy is cooperation in this situation. To verify the developed theory, we focus on the game between individuals with three game states, including the PDG, SDG, and SHG, which are three different social dilemma games, and the payoff parameters of the three games are shown in Tab. \ref{payoff parameter}. The transition relationships between the three game types are: the game state of each individual will change from PDG to SDG at a specific rate $\lambda_0$, from SDG to SHG at a rate $\lambda_1$ or to PDG at a rate $\mu_1$ and from SHG to SDG at another specific rate $\mu_2$. The networks in which the individuals are located are a WS small-world network (WS) with 10000 nodes and a square lattice network with periodic boundary (SL) with $200\times200$ nodes, which will be generated by the function $watts\_strogatz\_graph()$ of package $networkx$ in Python and our custom function, respectively. At the initial moment, the strategies of individuals are chosen from cooperate and defect with equal probability and all individuals are in the game state PDG, which means that the number of individuals performing SDG and SHG is 0. Moreover, each individual will be given a random number located in the interval (0, 4) to represent his/her reputation value $Re$, which basically follows the Gaussian distribution $Re\sim G(\mu, \sigma^2)$, where $\mu$ is the mean value and is set to 2, whereas $\sigma$ is the standard deviation and is set to 0.6. Furthermore, the unit $\delta$ of reputation change in Eq. \ref{repuation update} is fixed to 0.04 and the noise factor $\kappa$ in Eq. \ref{probability of updating strategy} is fixed to 0.1 in all simulations. Additionally, we let $Re_j^{t+1}=\frac{\delta}{2}$ if $Re_{j}^{t}-\delta \leq 0$ and let $Re_j^{t+1}=4$ if $Re_{j}^{t}+\delta > 4$, thus trying to ensure that the reputation of the individual is in a reasonable scope. In particular, we emphasize that the time scales of game transitions and strategy updates are different. The evolutionary time of an individual's game state is related to its specific game transition rate, while the time of an individual's strategy update is carried out by following other rules. All simulation evolutionary steps are set to $T = 10^4$, and the final results of each set of parameters are averaged over 5 independent simulations to maintain a good accuracy of the simulation results.

\subsection{The Number of Individuals in the Different Game States}

As mentioned in our model, the game state of each individual changes dynamically during the evolutionary process. We first study the evolution of the number of individuals conducting three different game states in the network under different parameters. As is proved in Thm. \ref{expected thm}, the number of individuals in the three game states is only related to the transition rates ($\lambda_0$, $\lambda_1$, $\mu_1$, and $\mu_2$) but not to the network type. Thus, we do not consider the network type as a variable. Besides, we set the evolution time to be large enough ($T=10^4$) to ensure that the number of individuals in the three game states reaches a stationary level and set the $x$-axis in a logarithmic coordinate to better observe the ascent and descent stage in the evolutionary process.

The evolutionary curves of the individual number staying in different game states over time with different parameters are demonstrated in Fig. \ref{t_num}, where the blue circles, black triangles, and red squares denote the individual number of PDG, SDG, and SHG varying with time, respectively. The transition rates of the game states in Fig. \ref{t_origin} are set to $\lambda_0=0.02$, $\lambda_1=0.06$, $\mu_1=0.04$, and $\mu_2=0.08$, and the rest of the plots are obtained by using the control variable method to change one parameter while keeping the other parameters constant. It can be clearly seen that the individual number in each game state becomes stationary around $t=100$ regardless of the value of transition rates and then fluctuates around a certain value. The subplots in Fig. \ref{t_num} also show that the number of individuals in the PDG tends to gradually decrease and then stabilize, while the number of individuals in the SDG and SHG tends to gradually increase and then stabilize. This is due to the fact that we set all the individuals in the network to be located in PDG at the initial moment, while the number of individuals located in SDG and SHG is 0. The difference in the subplots is the number of individuals located in the three game states at the steady state. Concretely, the individual number in the PDG, SDG, and SHG finally reaches about 530, 268, and 202, respectively in Fig. \ref{t_origin}. Fig. \ref{t_lam0} shows the evolutionary curves with $\lambda_0=0.04$, which is twice larger than that in Fig. \ref{t_origin} and the other parameters ($\lambda_1$, $\mu_1$, and $\mu_2$) remain the same. The individual number of PDG reduces to about 365, while the individual number of SDG and SHG grows to about 364 and 271, which indicates that a larger $\lambda_0$ causes the transition of some individuals located in the PDG to SDG and SHG. By setting the $\lambda_1=0.12$, Fig. \ref{t_lam1} depicts the individual number of PDG, SDG, and SHG eventually fluctuating around 449, 222, and 329, respectively. In Fig. \ref{t_mu1}, where the $\mu_1$ is set to be 0.08, the stationary number of PDG, SDG and SHG becomes 701, 171, and 128, meaning that some individuals will change from SDG and SHG to PDG by increasing $\mu_1$. Fig. \ref{t_mu2} with $\mu_2=0.16$ demonstrates that the number of individuals conducting SHG reduces to approximately 110, while the number of individuals conducting PDG and SDG increases to approximately 593 and 297, which suggests that a larger value of $\mu_2$ results in some individuals performing SHG change to perform PDG and SDG.

\begin{center}
\begin{figure}[htbp]
\centering
\subfigure[PDG]{
\includegraphics[scale=0.13]{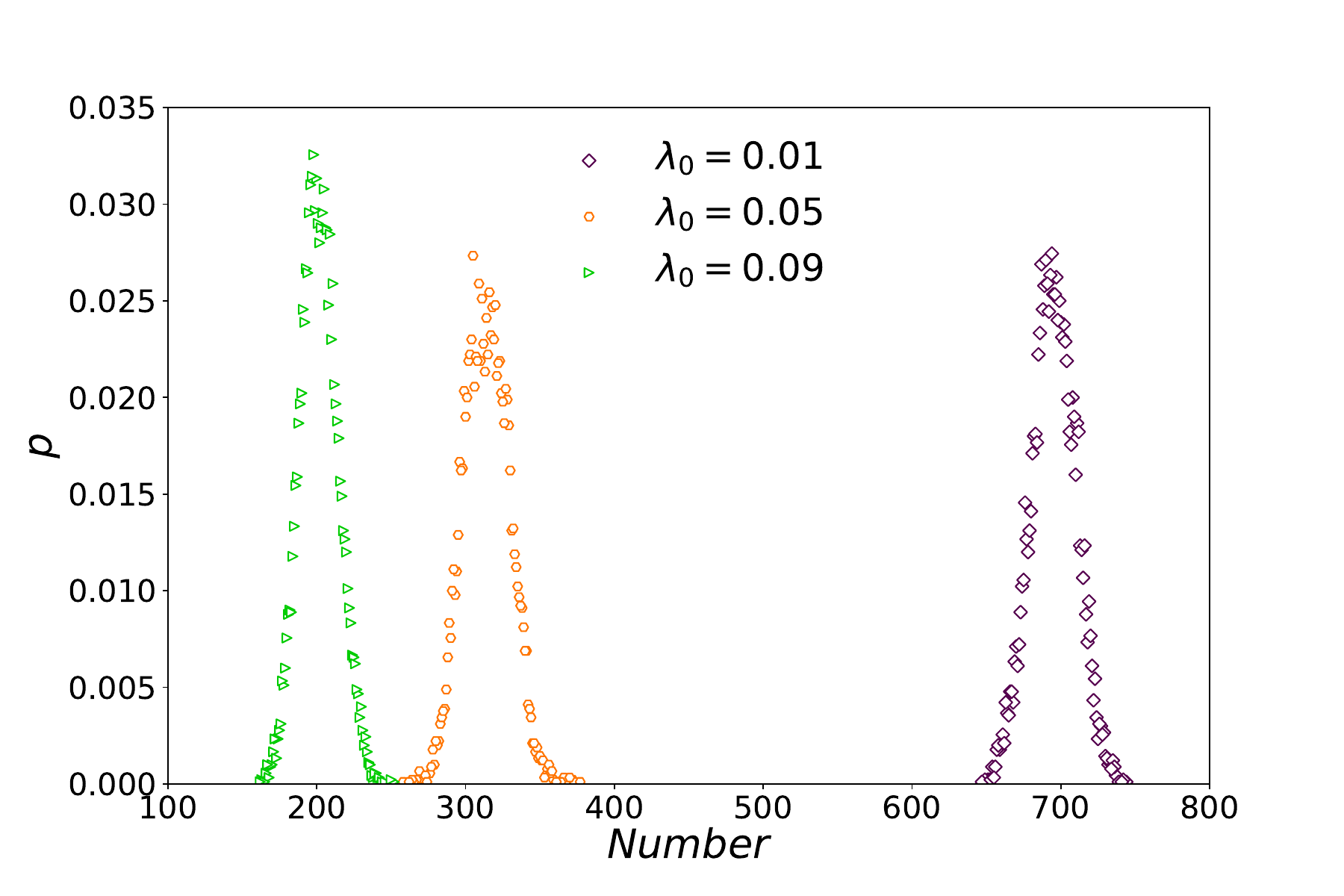}
\label{lam0_PDG}}
\subfigure[SDG]{
\includegraphics[scale=0.13]{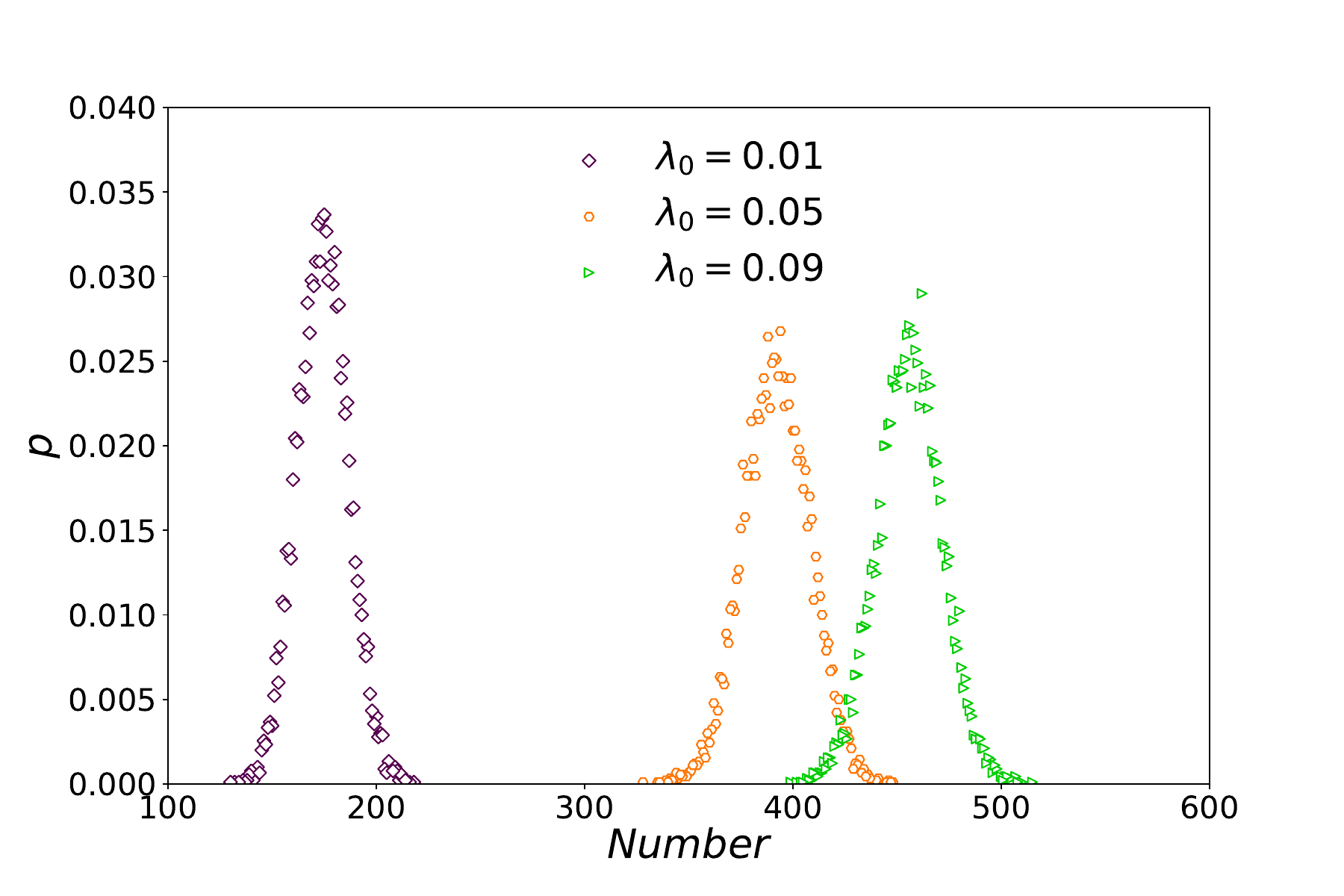}
\label{lam0_SDG}}
\subfigure[SHG]{
\includegraphics[scale=0.13]{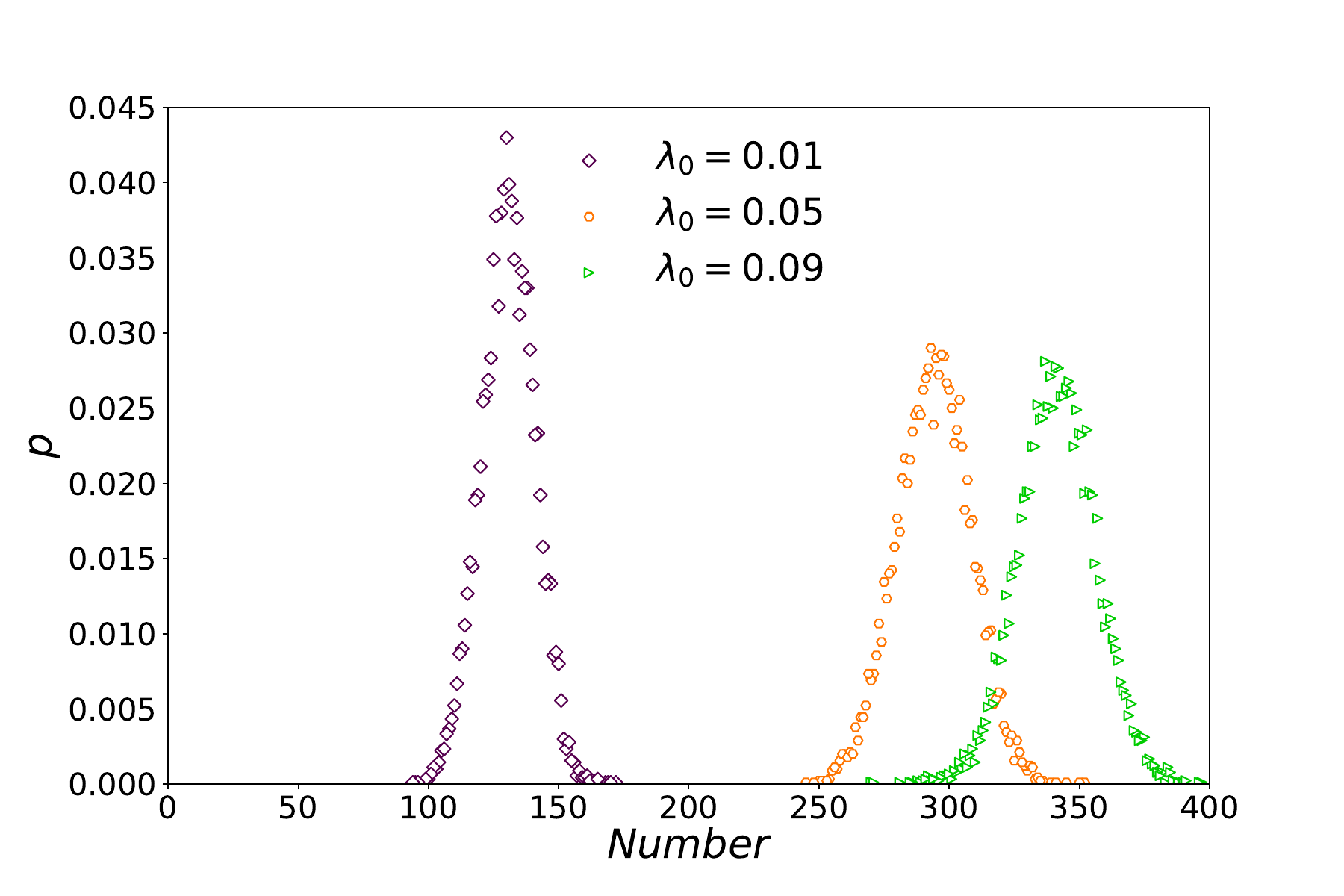}
\label{lam0_SHG}}
\caption{\textbf{Statistical results of the individual number staying in different game states with different $\lambda_0$s.} In this figure, we present the statistical distribution of the individual number in the network in the three game states with different parameters $\lambda_0$s. The $x$-axis and $y$-axis are set as the number of individuals and probability, respectively. (a) is the statistical distribution of the individual number staying in the PDG. (b) displays the statistical distribution of the individual number located in the SDG. And (c) demonstrates the statistical distribution of the individual number located in the SHG.}
\label{lam0_num}
\end{figure}
\end{center}

\begin{center}
\begin{figure}[htbp]
\centering
\subfigure[PDG]{
\includegraphics[scale=0.13]{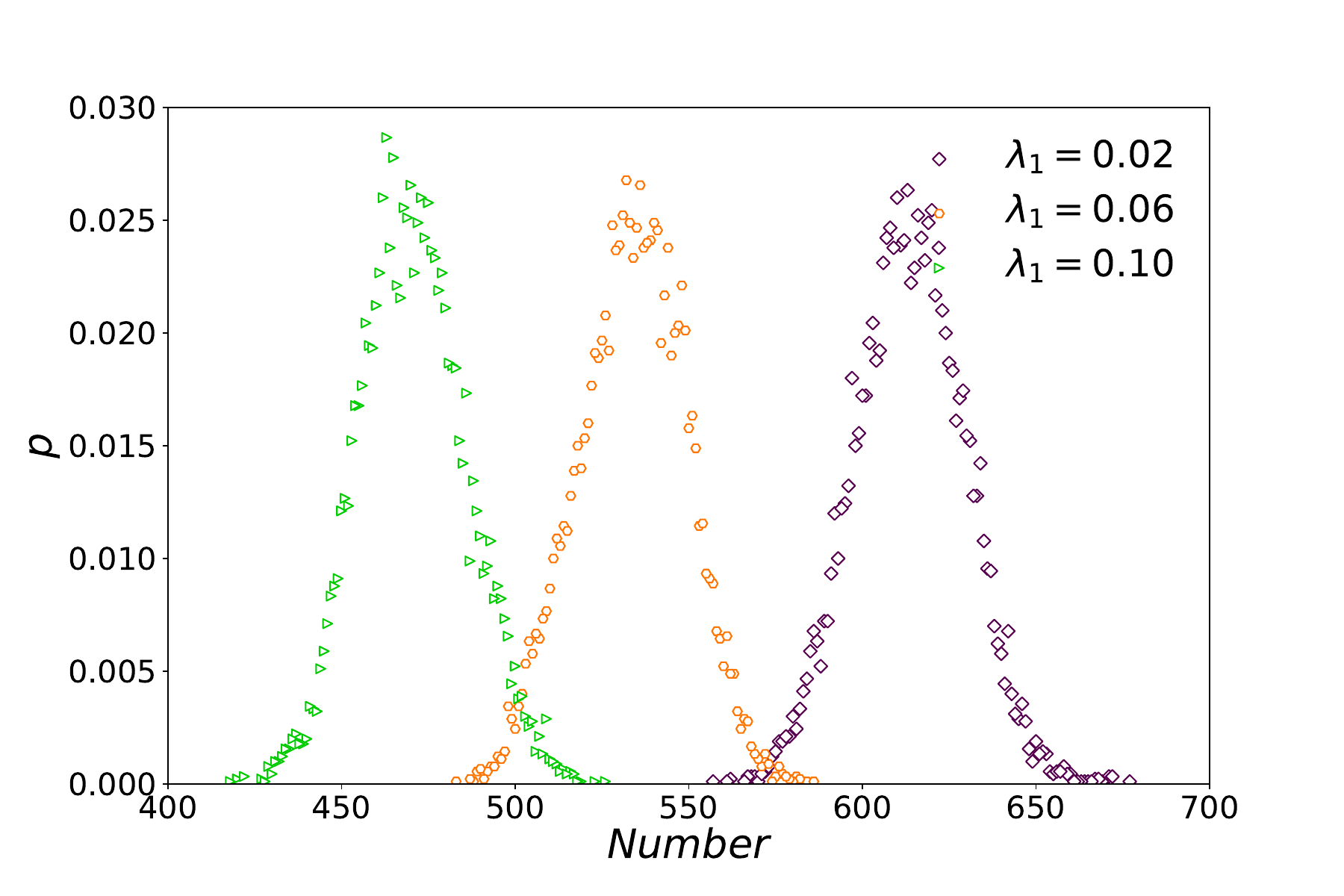}
\label{lam1_PDG}}
\subfigure[SDG]{
\includegraphics[scale=0.13]{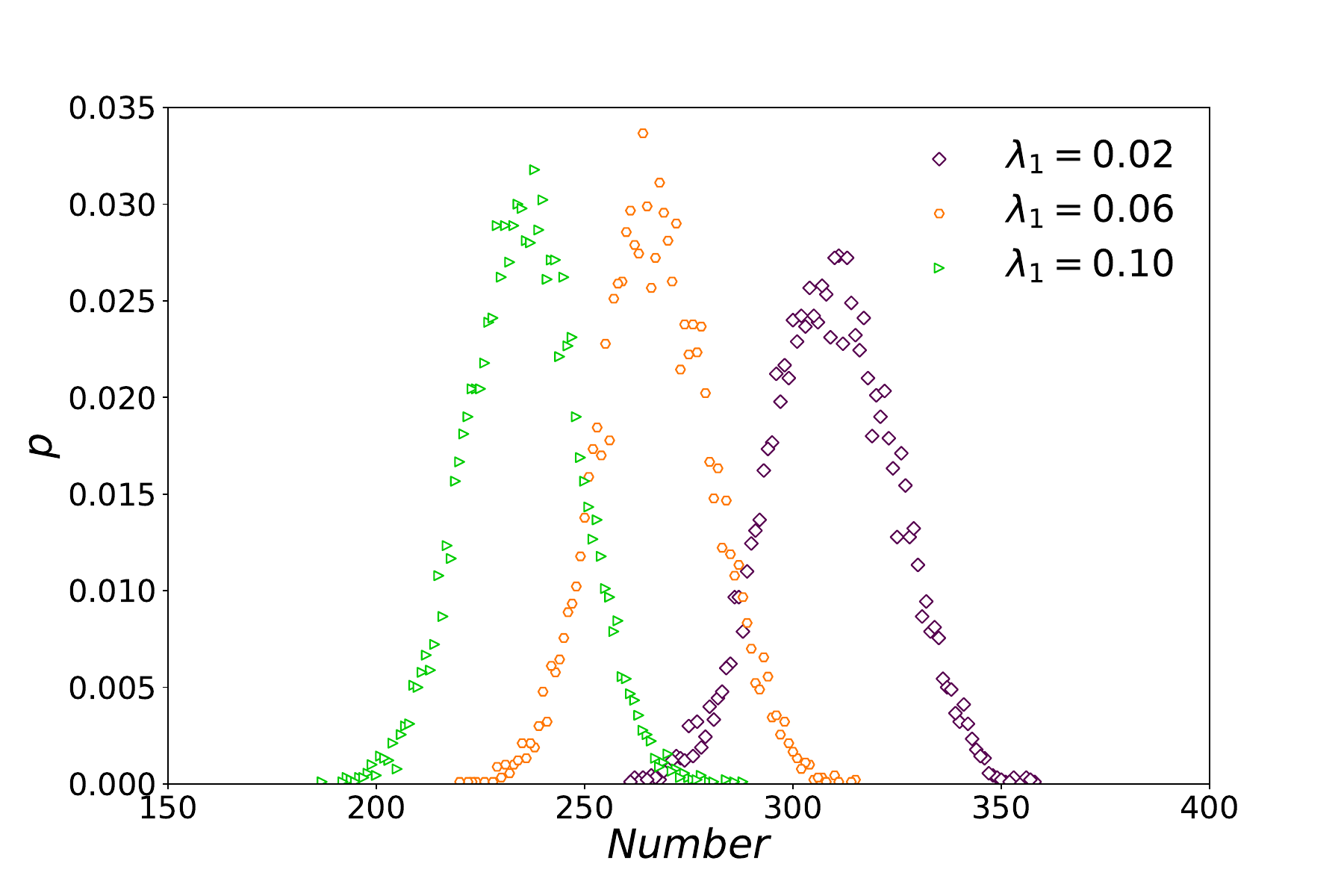}
\label{lam1_SDG}}
\subfigure[SHG]{
\includegraphics[scale=0.13]{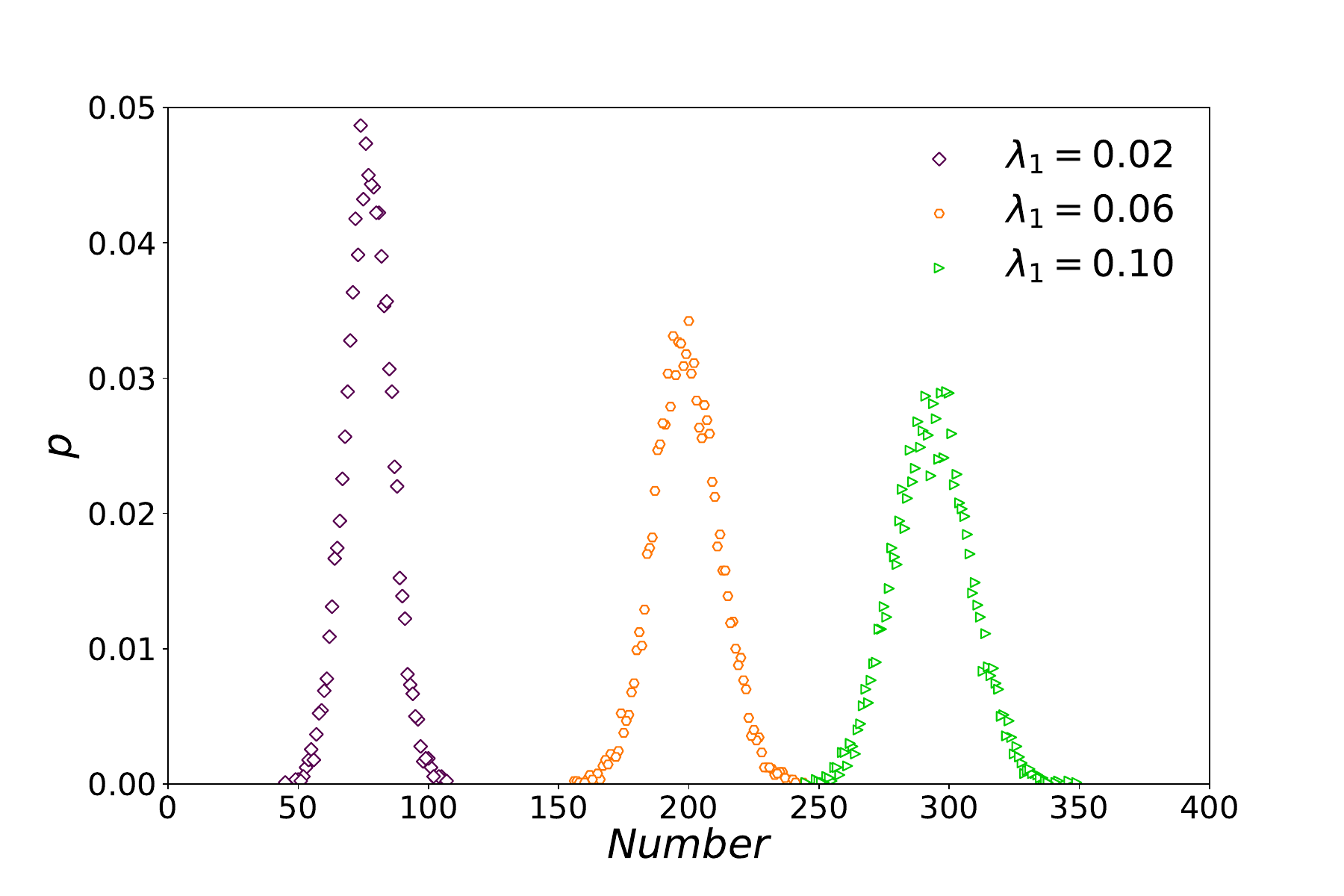}
\label{lam1_SHG}}
\caption{\textbf{Statistical results of the individual number staying in different game states with different $\lambda_1$s.} In this figure, we show the statistical distribution of the individual number in the network in the three game states with different parameters $\lambda_1$s. The $x$-axis and $y$-axis are set as the number of individuals and probability, respectively. (a) is the statistical distribution of the individual number staying in the PDG. (b) displays the statistical distribution of the individual number located in the SDG. And (c) demonstrates the statistical distribution of the individual number located in the SHG.}
\label{lam1_num}
\end{figure}
\end{center}

\begin{center}
\begin{table*}[htbp]
\renewcommand{\arraystretch}{1.5}
\begin{center}
\caption{The results of the stationary number of individuals staying in different game states with $\lambda_1=0.06,\mu_1=0.04,\mu_2=0.08$}
\begin{tabular}{c|ccc|ccc|ccc}
\hline
\multirow{2}{*}{Results} & \multicolumn{3}{c|}{$\lambda_0=0.01$} & \multicolumn{3}{c|}{$\lambda_0=0.05$} & \multicolumn{3}{c}{$\lambda_0=0.09$} \\ \cline{2-10}
                         & PDG    & SDG    & SHG   & PDG    & SDG    & SHG   & PDG    & SDG    & SHG   \\ \hline
Theoretical results      & 695.652    & 173.913    & 130.435   & 313.725      & 392.156      & 294.118     & 202.532      & 455.696     & 341.772     \\
Simulation results       & 695.081    & 174.303    & 130.611   & 313.900      & 392.101     & 293.999     & 201.954      & 456.265     & 341.781     \\
Relative error           & 0.082\%      & 0.224\%      & 0.135\%     & 0.056\%      &  0.014\%     & 0.041\%    & 0.285\%      & 0.125\%     & 0.003\% \\
\hline
\end{tabular}
\label{statistics with lam0}
\end{center}
\end{table*}
\end{center}

\begin{center}
\begin{table*}[htbp]
\renewcommand{\arraystretch}{1.5}
\begin{center}
\caption{The results of the stationary number of individuals staying in different game states with $\lambda_0=0.02,\mu_1=0.04,\mu_2=0.08$}
\begin{tabular}{c|ccc|ccc|ccc}
\hline
\multirow{2}{*}{Results} & \multicolumn{3}{c|}{$\lambda_1=0.02$} & \multicolumn{3}{c|}{$\lambda_1=0.06$} & \multicolumn{3}{c}{$\lambda_1=0.10$} \\ \cline{2-10}
                         & PDG    & SDG    & SHG   & PDG    & SDG    & SHG   & PDG    & SDG    & SHG   \\ \hline
Theoretical results      & 615.385    & 307.692    & 76.923   & 533.333      & 266.667      & 200.000     & 470.588      & 235.294     & 294.118     \\
Simulation results       & 613.697    & 309.382    & 76.921   & 533.699      & 267.122     & 199.179     & 470.813      & 235.520    & 293.668     \\
Relative error           & 0.274\%      & 0.549\%      & 0.003\%     & 0.069\%      &  0.171\%     & 0.411\%    & 0.048\%      & 0.096\%     & 0.153\% \\
\hline
\end{tabular}
\label{statistics with lam1}
\end{center}
\end{table*}
\end{center}

According to our theory, the number of PDG, SDG, and SHG performed in the network should be $n\mu_1\mu_2/(\mu_1\mu_2+\lambda_0\mu_2+\lambda_0\lambda_1)$, $n\lambda_0\mu_2/(\mu_1\mu_2+\lambda_0\mu_2+\lambda_0\lambda_1)$, and $n\lambda_0\lambda_1/(\mu_1\mu_2+\lambda_0\mu_2+\lambda_0\lambda_1)$ when the stationary state is reached, which indicates that the number of individuals in a certain game state is only related to the size of the network and the transition rates of the game state. Subsequently, we show the simulated scale distributions of the number of individuals staying in three game states. The results are shown in Figs. \ref{lam0_num} and \ref{lam1_num}, where $\lambda_0$ and $\lambda_1$ are employed as independent variables, respectively. Concretely, we record the number of individuals in three different game states at the last 9000 steps of a total of $10^4$ steps, and the number of individuals has evolved stably (as can be seen in Fig. \ref{t_num}). Next, we utilize the function $Counter$ of package $collections$ in Python to count the frequency of each number of individuals and treat the frequency as probability according to the law of large numbers. By setting the parameters $\lambda_1=0.06,\mu_1=0.04,\mu_2=0.08$, we demonstrate the statistical distributions of the individual number staying in PDG, SDG, and SHG with different $\lambda_0$s ($\lambda_0=0.01,0.05$, and $0.09$) in Figs. \ref{lam0_PDG}, \ref{lam0_SDG}, and \ref{lam0_SHG}, from which we get that each probability distribution approximately follows a normal distribution. Besides, in Figs. \ref{lam0_PDG}, \ref{lam0_SDG}, and \ref{lam0_SHG}, for $\lambda_0=(0.01, 0.05, 0.09)$, the most probable number of individuals staying in PDG, SDG, and SHG are approximately (695, 314, 202), (174, 392, 456), and (131, 294, 342), respectively. It can be clearly seen that the distributions in Fig. \ref{lam0_num} are narrow bands, i.e., the deviations are small.

We also compare our theoretical results with the simulation results in Tab. \ref{statistics with lam0} to further verify our theory. The relative error is calculated by $e=\left| x-x^* \right|/x$, where $x$ is the theoretical result and $x^*$ is the simulation one. The maximum value of the relative error in Tab. \ref{statistics with lam0} equals 0.285\%, which means that the results obtained from the simulation are very close to the theoretical results and demonstrate the validity of our theoretical analysis in Thm. \ref{expected thm}. Furthermore, we can see that the number of individuals staying in PDG will reduce as the transition rate $\lambda_0$ grows, while the number of individuals located in SDG and SHG will increase.

Next, we show the statistical distributions of the individual number staying in PDG, SDG, and SHG with different values $\lambda_1$ ($\lambda_1=0.02, 0.06$, and $0.10$) in Figs. \ref{lam1_PDG}, \ref{lam1_SDG}, and \ref{lam1_SHG}, where the other parameters are set as $\lambda_0=0.02,\mu_1=0.04$, and $\mu_2=0.08$. We see that the horizontal coordinates corresponding to the peaks of individuals located in the same game state are different for different $\lambda_1$. In detail, in Fig. \ref{lam1_PDG}, the most probable number of individuals staying in PDG with $\lambda_1=0.02$ marked by the purple diamond, $\lambda_1=0.06$ marked by the orange circle, and $\lambda_1=0.10$ marked by the green triangle are about 615, 533, and 471, respectively. For Figs. \ref{lam1_SDG} and \ref{lam1_SHG}, the horizontal coordinates corresponding to the peaks of individuals located in SDG and SHG are around (308, 267, 235) and (77, 200, 294), respectively. Analogously, we calculate the relative error to further verify our theory, and the results are shown in Tab. \ref{statistics with lam1}, from which we infer that all the relative errors are less than 0.549\%, which also illustrates the accuracy of our theoretical analysis in Thm. \ref{expected thm}. Furthermore, we obtain that the distributions of Fig. \ref{lam1_num} are wider than that of Fig. \ref{lam0_num}, which suggests that the deviations of Fig. \ref{lam1_num} are larger. Fig. \ref{lam1_num} also exhibits that the number of individuals staying in PDG and SDG will decrease as the transition rate $\lambda_1$ increases, while the number of individuals located in SHG will grow.

\subsection{The Effect of $\mu_1$ and $\mu_2$ on the Cooperation Density}
\label{effect of mu1 and mu2}

In this simulation, we investigate the influence of the transition rates $\mu_1$ from SDG to PDG and $\mu_2$ from SHG to SDG on the cooperation of WS and SL networks. Except $\mu_1$ and $\mu_2$, by fixing the parameters $b=1.5$, $r=0.5$, and $\lambda_0 = \lambda_1 = 0.03$, we show the function of the cooperation frequency ($f_c$) on $\mu_1$ under different $\mu_2$ in Fig. \ref{mu1_f(mu2)}.

\begin{center}
\begin{figure}[htbp]
\centering
\subfigure[WS]{
\includegraphics[scale=0.14]{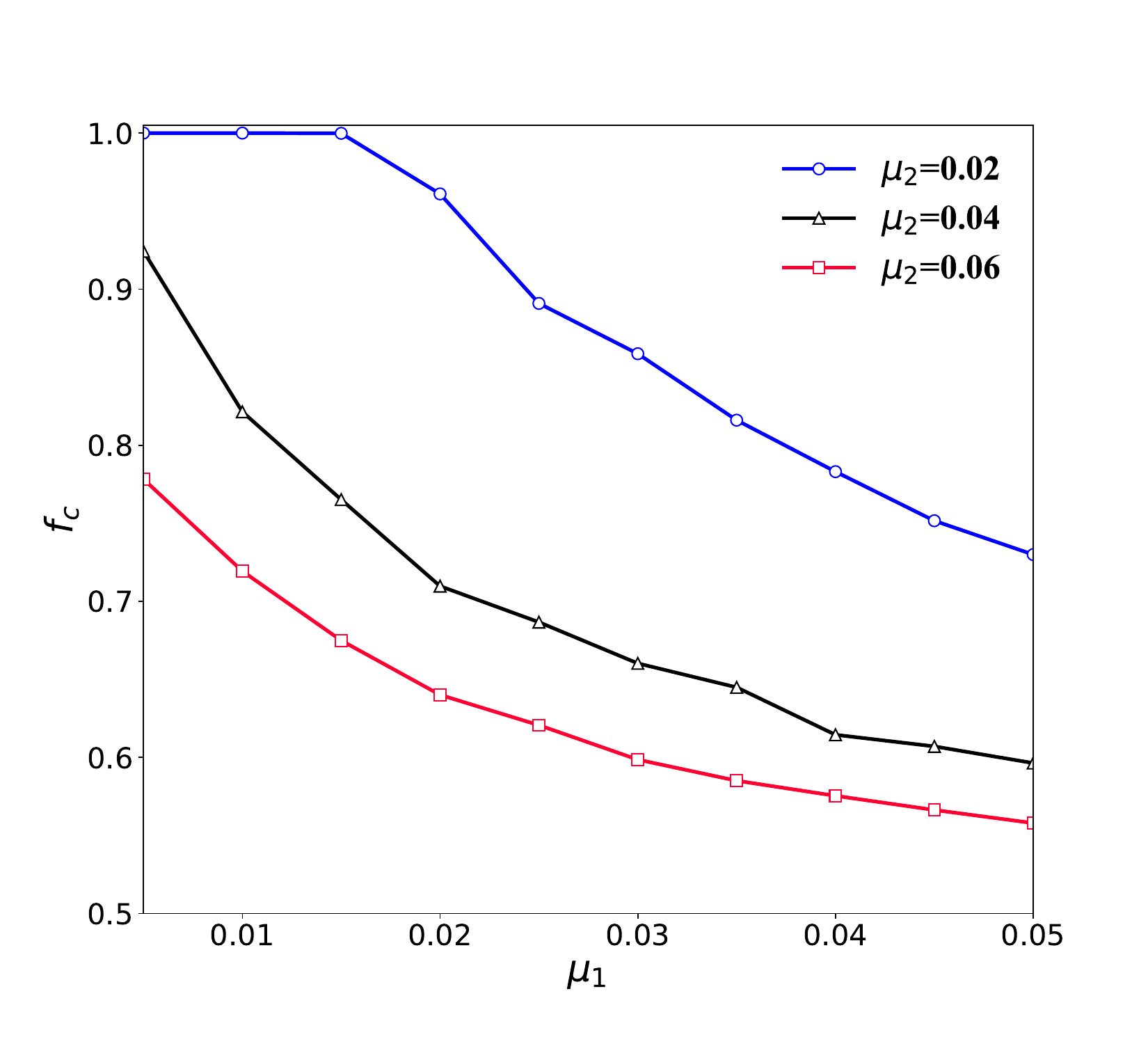}
\label{mu1_f(a)}}
\subfigure[SL]{
\includegraphics[scale=0.14]{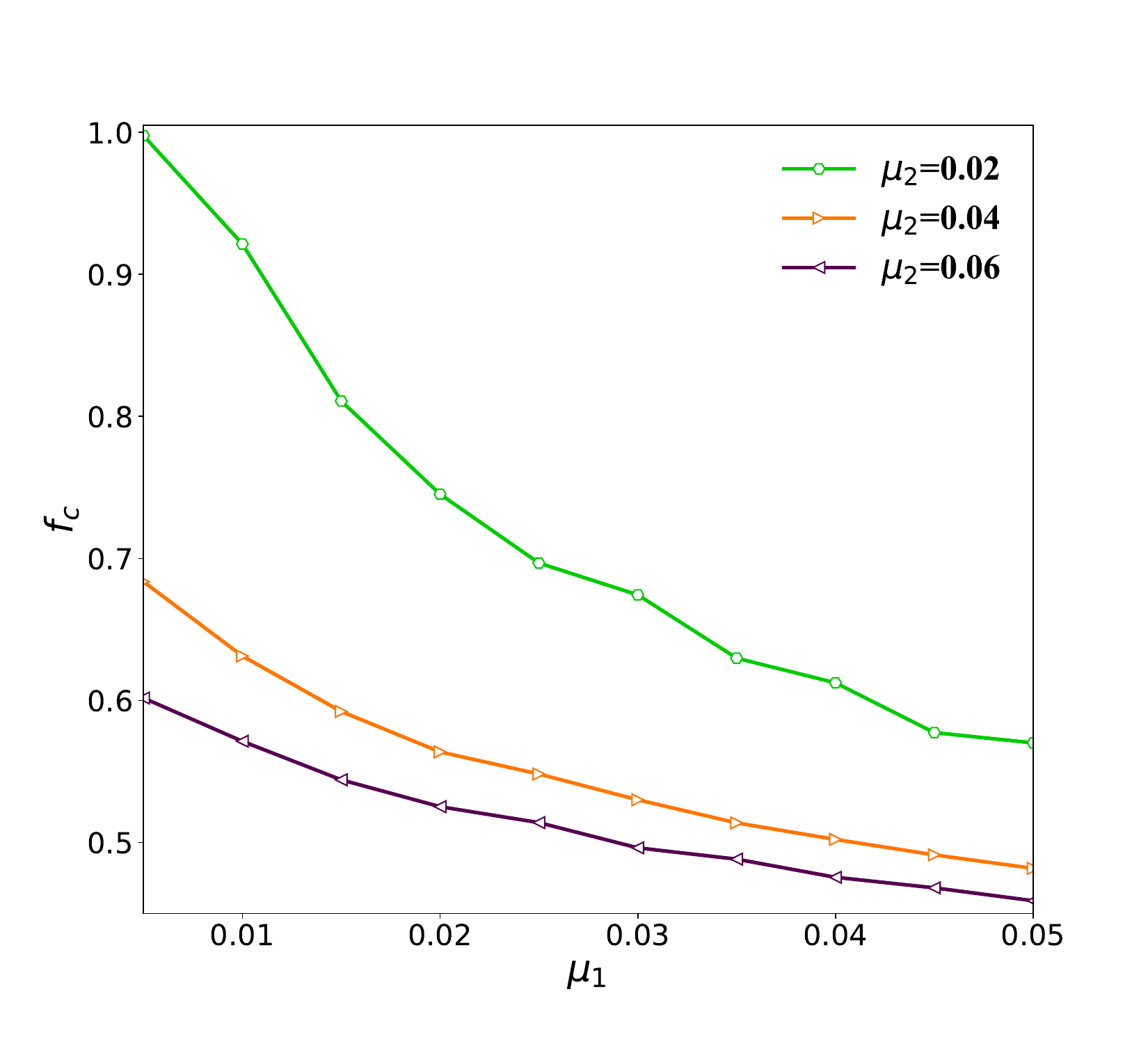}
\label{mu1_f(b)}}
\caption{\textbf{Plots of cooperation density against $\mu_1$ under different $\mu_2$.} By setting the payoff parameters $b=1.5$ of PDG and $r=0.5$ of SDG and SHG, we present the cooperation density against $\mu_1$ under different $\mu_2$s on the WS (in subplot (a)) and SL (in subplot(b)) networks, respectively, where the $x$-axis is set as $\mu_1$, which denotes the transition rate from SDG to PDG, while the $y$-axis is set as the cooperation density. The range of each $x$-axis is set as [0.005, 0.05], whereas the ranges of $y$-axis of subplot(a) and subplot(b) are set as [0.5, 1] and [0.45, 1], respectively. In general, it can be seen that the increases of both $\mu_1$ and $\mu_2$ inhibit the cooperation density on the WS and SL networks.}
\label{mu1_f(mu2)}
\end{figure}
\end{center}

As shown both in Figs. \ref{mu1_f(a)} and \ref{mu1_f(b)}, we yield that the frequency of cooperation decreases as $\mu_1$ and $\mu_2$ increase on both the WS and SL networks, i.e., both $\mu_1$ and $\mu_2$ act as a disincentive to the cooperative behavior of the networks. Moreover, under the same conditions (the same $\mu_1$ and $\mu_2$), the proportion of cooperation on the WS network in Fig. \ref{mu1_f(a)} is higher than that on the SL network in Fig. \ref{mu1_f(b)}, which implies that the WS network facilitates the evolution of cooperation more than that of the SL network. Furthermore, within the same range of parameter $\mu_1$, different $\mu_2$ have different degrees of decline. For example, on the SL network in Fig. \ref{mu1_f(b)}, the cooperation frequency corresponding to $\mu_2=0.02$ decreases from 1 at the beginning to 0.57 at the end, the cooperation frequency corresponding to $\mu_2=0.04$ decreases from 0.68 to 0.48, while the cooperation frequency corresponding to $\mu_2=0.06$ decreases from 0.60 to 0.46, and the degree of decrease in these three cases are 0.43, 0.20, and 0.14, respectively. Therefore, we can derive that the degree of decline of $f_c$ decreases with increasing $\mu_2$.

Subsequently, we make an explanation of the above phenomenon in Fig. \ref{mu1_f(mu2)}. Primarily, the three game models have different Nash equilibria, with (D, D) for PDG, (D, C) for SDG, and (D, D) or (C, C) for SHG. When the rate of transition from SDG to PDG is greater than the rate of transition from PDG to SDG, i.e., $\mu_1 > \lambda_0$, there will be more individuals in the network who tend to choose defect. In a similar way, some individuals in the network will change from cooperators to defectors when the rate of transformation of SHG to SDG is greater than the rate of transformation of SDG to SHG, i.e., $\mu_2 > \lambda_1$. Besides, a larger difference in the transition rate will lead to the above phenomenon being exacerbated, namely, the percentage of cooperation in the network becomes lower. Both $\lambda_0$ and $\lambda_1$ are fixed in the simulations of Fig. \ref{mu1_f(mu2)}, so the cooperative behavior will be suppressed with the increase of $\mu_1$ and $\mu_2$.

\subsection{The Influence of $\lambda_0$ and $\lambda_1$ on the Cooperation Frequency}

\begin{center}
\begin{figure}[htbp]
\centering
\subfigure[WS]{
\includegraphics[scale=0.27]{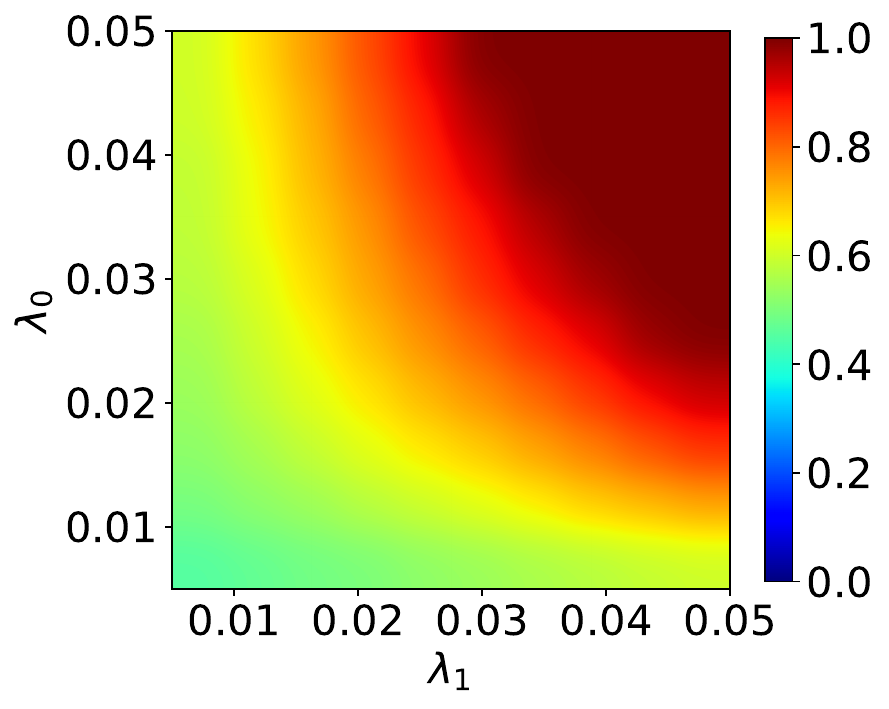}
\label{lam0_lam1_f(a)}}
\subfigure[SL]{
\includegraphics[scale=0.27]{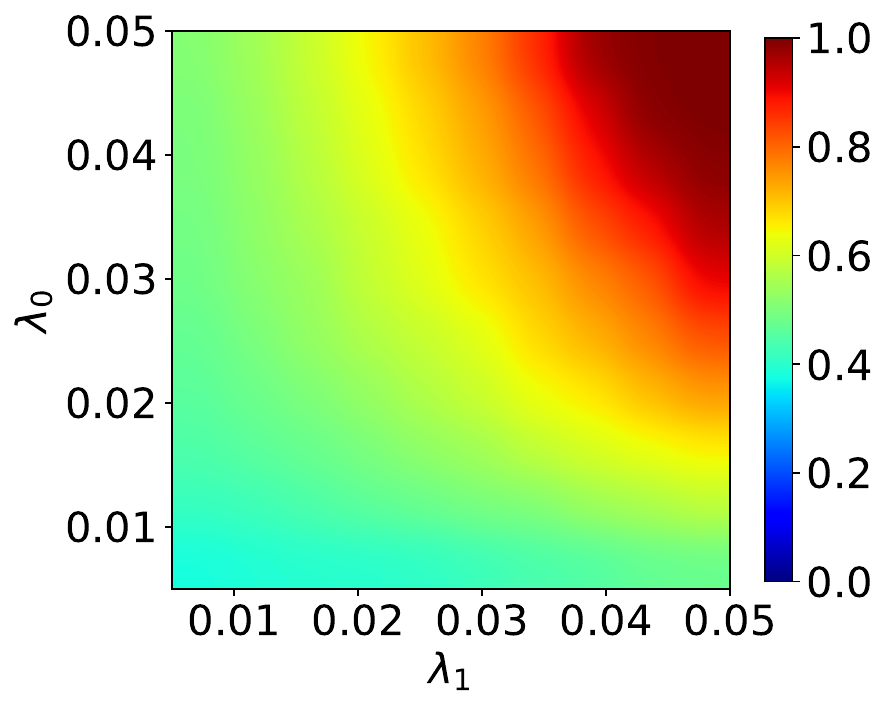}
\label{lam0_lam1_f(b)}}
\caption{\textbf{Heat maps of cooperation frequency with respect to parameters $\lambda_0$ and $\lambda_1$.} This figure demonstrates the influence of $\lambda_0$ and $\lambda_1$ on the cooperation frequency on the WS (in panel (a)) and SL (in panel (b)) networks. We set the payoff parameters $b=1.5$ of PDG and $r=0.5$ of SDG and SHG. The $x$-axis is set as the $\lambda_1$ with the range [0.005, 0.05], which represents the transition rate from SDG to SHG, and the $y$-axis is set as the $\lambda_0$ with the same range, which indicates the transition rate from PDG to SDG. Both WS and SL networks demonstrate that not only parameter $\lambda_0$ but also parameter $\lambda_1$ can facilitate the emergence of cooperation.}
\label{lam0_lam1_f}
\end{figure}
\end{center}

As shown in subsection \ref{effect of mu1 and mu2}, the transition rates $\mu_1$ and $\mu_2$ have a large effect on the cooperative behavior, and in this subsection, we will explore the influence of $\lambda_0$ and $\lambda_1$ on the evolution of cooperation on WS and SL networks. By fixing the payoff parameters $b=1.5$ and $r=0.5$, and the other two transition rates $\mu_1=0.03$ and $\mu_2=0.02$, we present the heat maps of the cooperation frequency with respect to the parameters $\lambda_0$ and $\lambda_1$ in Fig. \ref{lam0_lam1_f}, where a warmer color (the color is closer to red) means a higher percentage of cooperation. The cooperation frequency of each parameter pair ($\lambda_1$, $\lambda_0$) is averaged by 5 independent simulations, and the cooperation frequency in each simulation is gained by averaging the last 500 steps of cooperation density in the total of $10^4$ evolution steps. In Fig. \ref{lam0_lam1_f(a)}, we see that the color changes from cool (blue) to warm (red) as $\lambda_0$ and $\lambda_1$ grow, which indicates that the cooperation ratio is enhanced. Additionally, pure cooperators will appear on the WS network when both $\lambda_0$ and $\lambda_1$ are large (e.g., $\lambda_0 > 0.035$ and $\lambda_1 > 0.035$), while almost no pure defectors will emerge in the network no matter how small $\lambda_0$ and $\lambda_1$ are. In Fig. \ref{lam0_lam1_f(b)}, the SL network also exhibits an overall increase in the proportion of cooperation with increasing parameters $\lambda_0$ and $\lambda_1$, which is consistent with the phenomenon on the WS network. However, what is different from the WS network is the presence of many defectors on the SL network, which appears when $\lambda_0$ is very small. The region where pure cooperators appear on the SL network is also narrower than that on the WS network and only appears when both $\lambda_0$ and $\lambda_1$ are very large (e.g., $\lambda_0 > 0.045$ and $\lambda_1 > 0.045$).

As explained in subsection \ref{effect of mu1 and mu2}, SDG facilitates the emergence of cooperation more than PDG, but less than SHG for the evolution of cooperation. The parameters $\lambda_0$ and $\lambda_1$ mean the transition rate from PDG to SDG and the transition rate from SDG to SHG, respectively. Therefore, increasing both parameters can facilitate the transformation of the game model towards a game model that is more conducive to the emergence of cooperation. In addition, we obtain that the WS network is more favorable to the survival of cooperators than that of the SL network by comparing Fig. \ref{lam0_lam1_f(a)} with Fig. \ref{lam0_lam1_f(b)}, which is in accordance with our previous analysis.

\begin{center}
\begin{figure}[htbp]
\centering
\subfigure[WS]{
\includegraphics[scale=0.27]{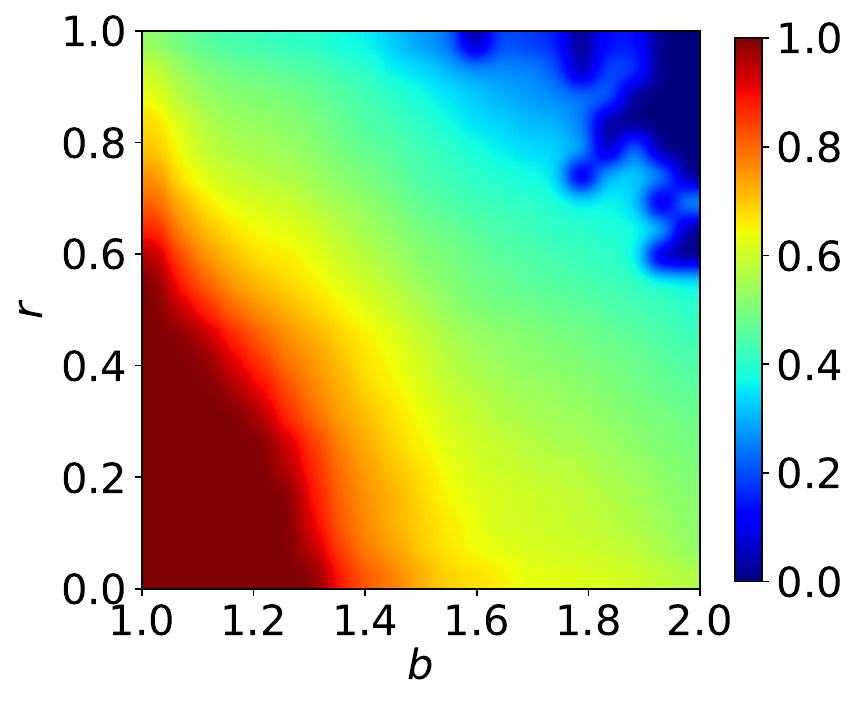}
\label{r_b_f(a)}}
\subfigure[SL]{
\includegraphics[scale=0.27]{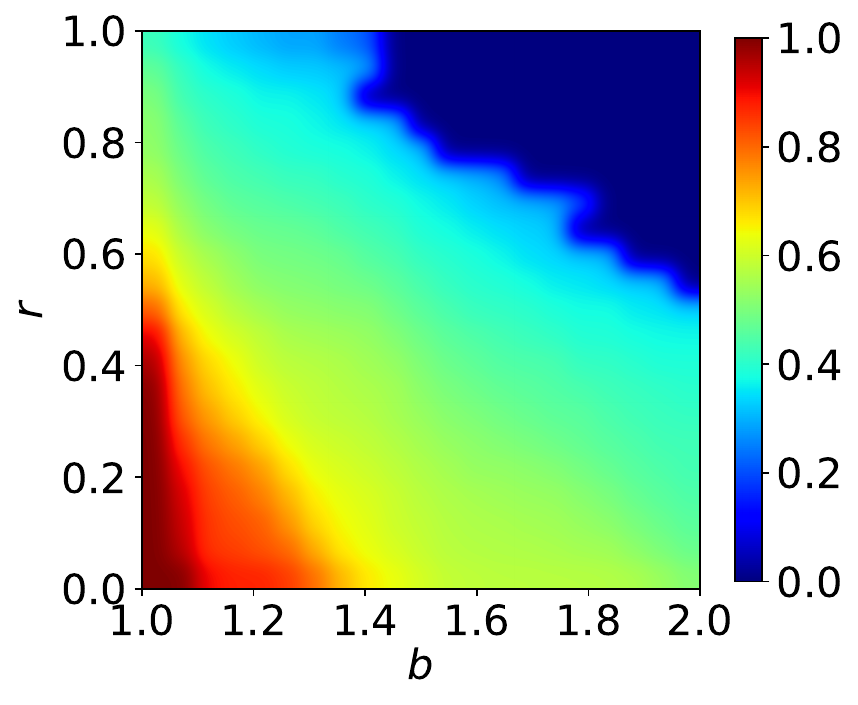}
\label{r_b_f(b)}}
\caption{\textbf{Heat maps of cooperation density about parameters $b$ and $r$.} By setting the $x$-axis as the payoff parameter $b$ with range [1, 2] and $y$-axis as the payoff parameter $r$ with range [0, 1], we show the heat maps of cooperation density on WS (in subplot(a)) and SL (in subplot(b)) networks with respect to payoff $b$ and $r$, from which we can obtain that the cooperation will be promoted by decreasing the payoff parameter $b$ or $r$.}
\label{r_b_f}
\end{figure}
\end{center}

\subsection{The Effect of Payoff Parameters on Network Cooperation Behavior}

In our previous study, we focused on the effect of the game state transition rate on the cooperative behavior of the network, while fixing the payoff parameters $b$ and $r$. Herein, we will investigate the effect of the payoff parameters $b$ and $r$ on the evolution of network cooperation by fixing the other parameters except for the payoff parameters. Specifically, we set the four transition rates between PDG, SDG, and SHG as $\lambda_0=0.015$, $\lambda_1=0.01$, $\mu_1=0.03$, and $\mu_2=0.02$. The results of the evolution of the cooperation ratio about $b$ and $r$ on the WS and SL networks are shown in Figs. \ref{r_b_f(a)} and \ref{r_b_f(b)}, respectively. In the WS network, pure cooperators will emerge with small $b$ and $r$ (e.g., $b < 1.2$ and $r < 0.3$), while pure defectors will appear with very large $b$ and $r$ (e.g., $b > 1.9$ and $r > 0.8$), which can be seen from Fig. \ref{r_b_f(a)}. While in the SL network, unlike the WS network, pure cooperators will emerge with very small $b$ and $r$ (e.g., $b < 1.05$ and $r < 0.05$), while pure defectors will appear with large $b$ and $r$ (e.g., $b > 1.7$ and $r > 0.7$), which can be seen from Fig. \ref{r_b_f(b)}. Besides, we can clearly see that the region of pure cooperators on the WS network is larger than that on the SL network, while the area of pure defectors is smaller than that on the SL network, which indirectly proves that the WS network is more beneficial to the emergence of cooperators than the SL network. We also note that the color distribution in Fig. \ref{r_b_f} is not strictly ordered, which is likely due to the randomness of the initial setup and the players deciding their next strategy based on the previous step.

Next, we give the reasons for the phenomenon arising in Fig. \ref{r_b_f}. Primarily, we infer that the payoff parameter $b$ is only related to the PDG and the payoff of a defector performing PDG in the network will increase as $b$ grows according to the parameters of the weak prisoner's dilemma game in Tab. \ref{payoff parameter}, leading to individuals who are engaged in PDG being more likely to choose to be defector than cooperator. Analogously, for the payoff parameter $r$, which is related to both SDG and SHG, the payoff of a defector conducting SDG and SHG in the network will rise as $r$ increases, while the payoff of a cooperator will reduce, resulting in the defective strategy becoming the preferred strategy for individuals who perform SDG and SHG in the network. Therefore, we can conclude that the growth of both the payoff parameters $b$ and $r$ will have an inhibitory influence on the emergence of cooperative behavior in the network.

\subsection{The Impact of Payoff Parameters on Network Cooperation Behavior without Reputation Mechanism}

It is worth noting that our previous results show the effectiveness of the proposed model on the evolution of cooperation, but these results depend on the game transition as well as the reputation mechanism. In this subsection, we explore the evolutionary behavior of cooperation in such cases by only taking the game transition into account. The results of the cooperative evolution on the WS and SL networks are respectively shown in Figs. \ref{r_b_f(a)_noreputation} and \ref{r_b_f(b)_noreputation}, with exactly the same parameter settings as in Figs. \ref{r_b_f(a)} and \ref{r_b_f(b)}, except that there is no reputation mechanism.

\begin{center}
\begin{figure}[htbp]
\centering
\subfigure[WS]{
\includegraphics[scale=0.27]{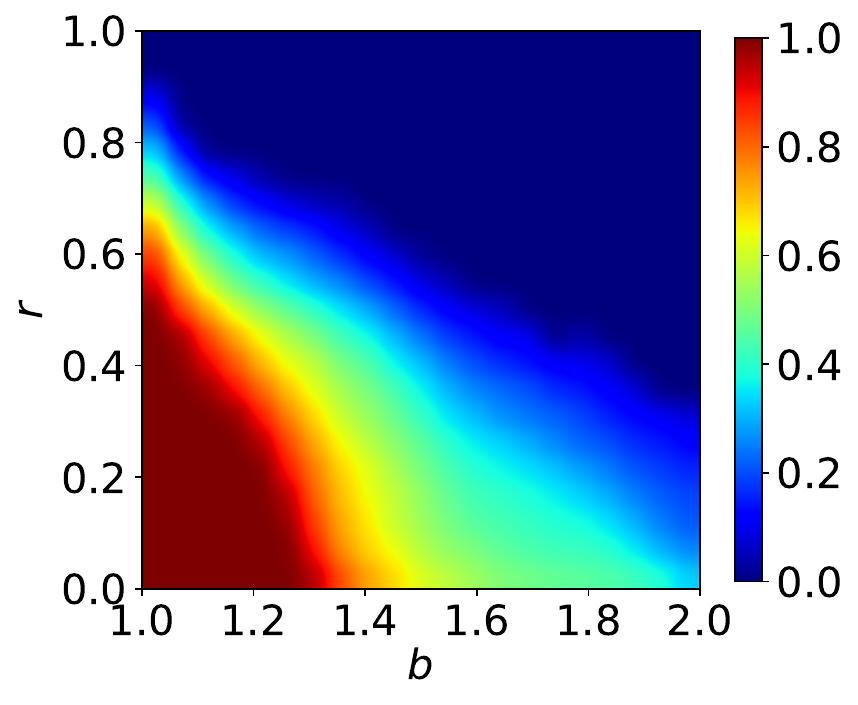}
\label{r_b_f(a)_noreputation}}
\subfigure[SL]{
\includegraphics[scale=0.27]{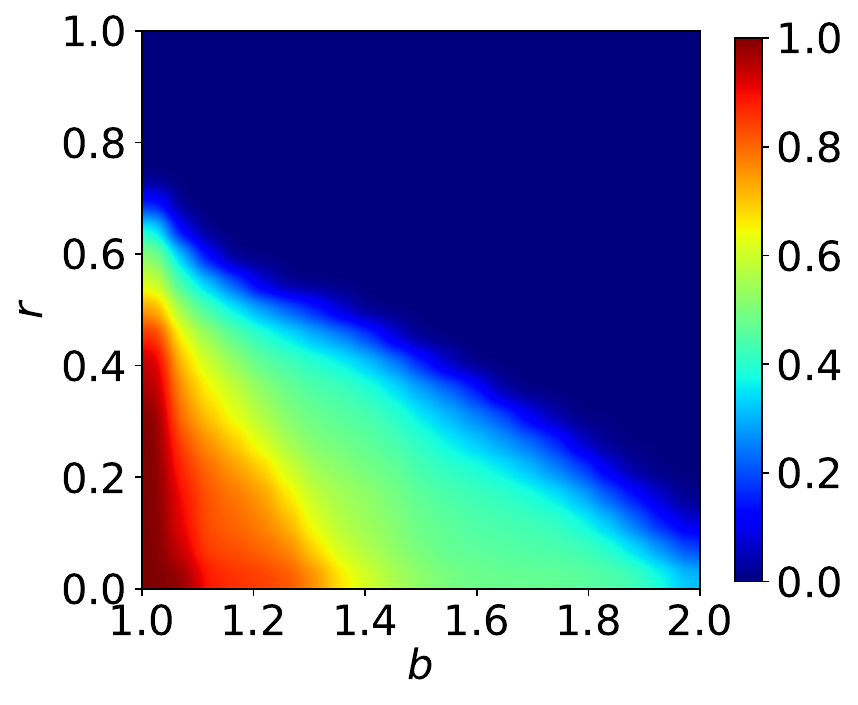}
\label{r_b_f(b)_noreputation}}
\caption{\textbf{Heat maps of cooperators about parameters $b$ and $r$ without reputation mechanism.} This figure displays the influence of $b$ and $r$ on the cooperation density on the WS (in panel (a)) and SL (in panel (b)) networks without reputation mechanism and the parameter settings are exactly the same as in Fig. \ref{r_b_f}. Both networks exhibit a decrease in the number of cooperators who consider only the game transitions compared to the combined consideration of reputation mechanisms.}
\label{r_b_f_noreputation}
\end{figure}
\end{center}

We observe that the proportion of cooperators still decreases as $b$ and $r$ increase, whether on the WS or SL networks. Plus, we can also see that the area of dark red in Fig. \ref{r_b_f(a)_noreputation} is much larger than that in Fig. \ref{r_b_f(b)_noreputation}, while the area of dark blue in Fig. \ref{r_b_f(b)_noreputation} is larger than that in Fig. \ref{r_b_f(a)_noreputation}, which indicates that the WS network is more conducive to the emergence of cooperators than the SL network even without the reputation mechanism. In addition, by comparing Fig. \ref{r_b_f(a)} with Fig. \ref{r_b_f(a)_noreputation} and Fig. \ref{r_b_f(b)} with Fig. \ref{r_b_f(b)_noreputation}, we can find that although their conditions for the emergence of pure cooperators are similar, the circumstances for the emergence of pure defectors on Figs. \ref{r_b_f(a)_noreputation} and \ref{r_b_f(b)_noreputation} are much greater than those on Figs. \ref{r_b_f(a)} and \ref{r_b_f(b)}, which can be gained from the areas of dark red and dark blue in the four subfigures. Therefore, through this comparative experiment, we get that although only taking the game transition into account can promote the evolution of cooperation, the promotive effect is weaker than that of both the reputation mechanism.

\subsection{Different Time Scales of Strategy Updates}
\label{different scales of strategy updates}

In our previous simulations, we conducted strategy updates at integer time for individuals. In this simulation, we investigate the impact of different time scales of strategy updates on the evolution of cooperation. Specifically, we explore scenarios where the time interval of strategy update follows fixed values (0.5, 1, 5), exponential and power-law distributions. The evolutionary curves of cooperation fractions over time are illustrated in Figs. \ref{t_f_WS} and \ref{t_f_SL} for WS and SL networks, respectively. The sizes of WS and SL networks are set to 10000, and other parameters are configured as $\lambda_0=\mu_1=0.03, \lambda_1=0.04, \mu_2=0.02, b=1.5$, and $r=0.5$. To ensure that the cooperation frequency in all cases reaches a stationary level, we set the evolution time to 20000. Furthermore, we utilize a logarithmic scale for the $x$-axis to better observe the ascent and descent stages in the evolutionary process.

From Figs. \ref{t_f_WS} and \ref{t_f_SL}, we observe that the cooperation frequency steadily evolves as time progresses in all cases and that the number of cooperators is higher when individuals update their strategies based on exponential and power-law distributions compared to fixed time intervals. Furthermore, both plots demonstrate that the time for the evolution of the cooperation density to plateau increases as the fixed time interval of strategy updates grows. This is because individuals are updating their strategies synchronously in this scenario, and a smaller fixed time interval results in a higher frequency of strategy updates by individuals. The distinction between Figs. \ref{t_f_WS} and \ref{t_f_SL} is that when the time interval of strategy update follows a power-law or exponential distribution, the proportion of cooperators on the WS network eventually reaches 1, while on the SL network, the proportion of cooperators does not evolve to 1 and the ratio of cooperators with exponential distribution is higher than that with the power-law distribution.

\begin{center}
\begin{figure}[htbp]
\centering
\subfigure[WS]{
\includegraphics[scale=0.14]{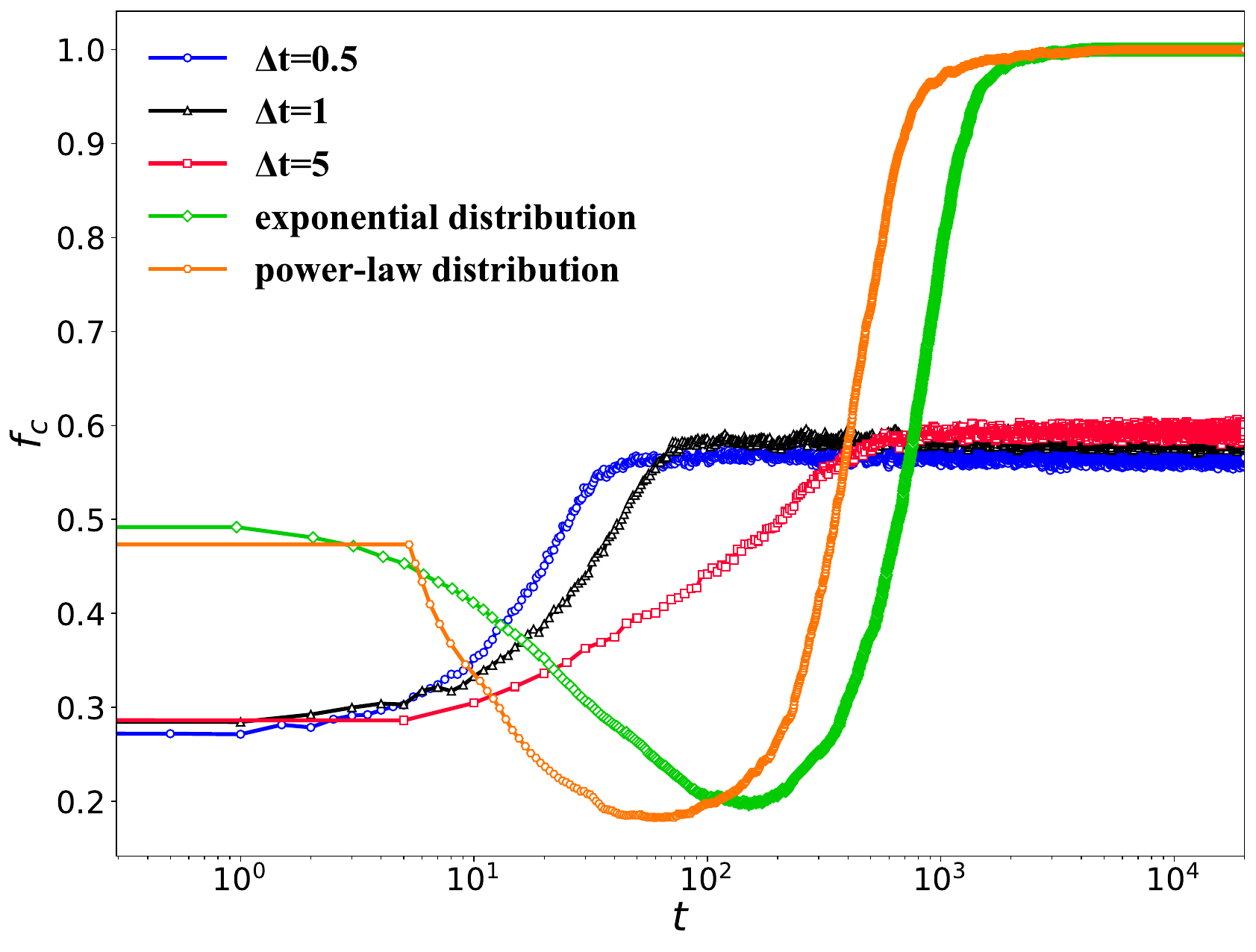}
\label{t_f_WS}}
\subfigure[SL]{
\includegraphics[scale=0.14]{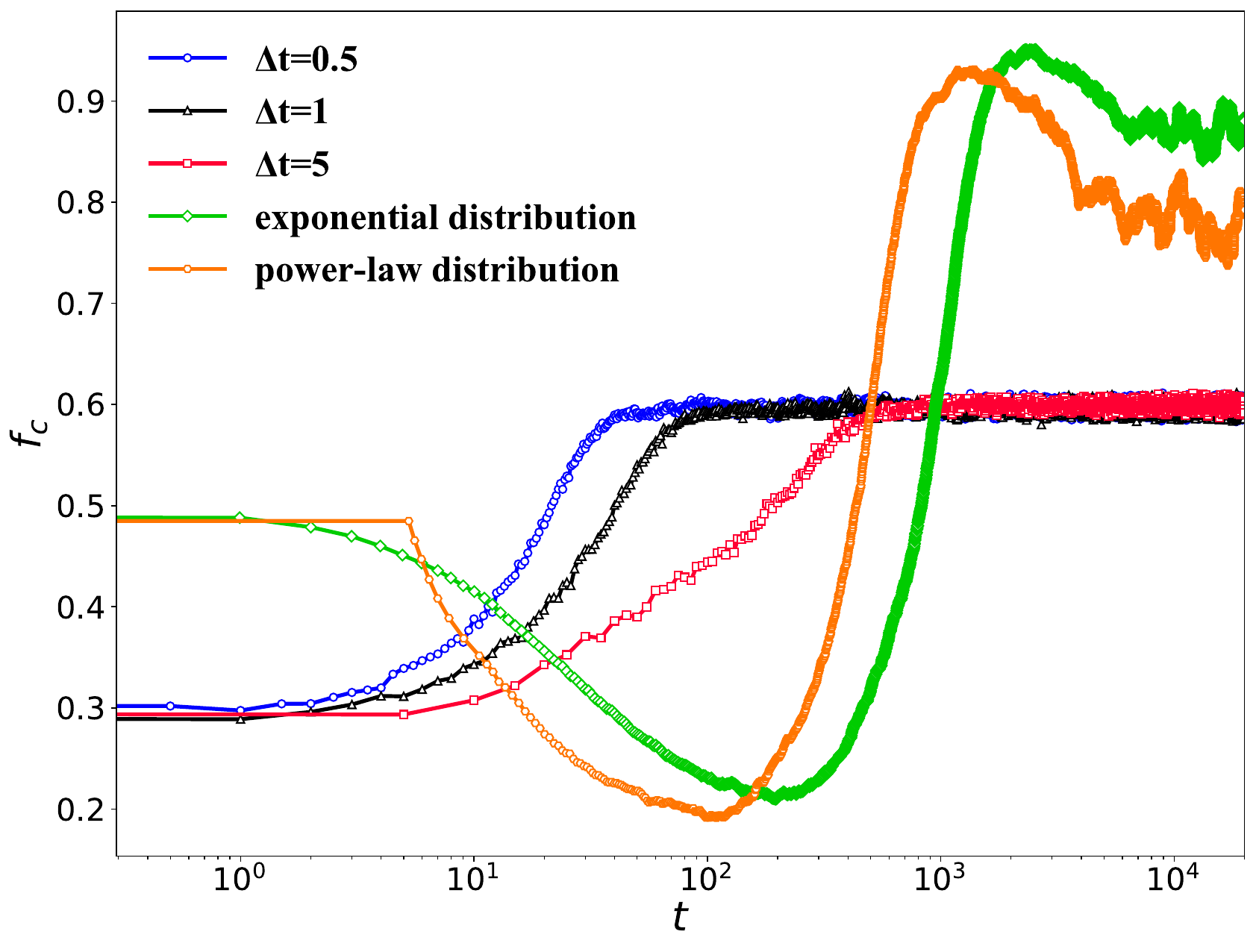}
\label{t_f_SL}}
\caption{\textbf{Evolutionary curves of cooperation frequency under different time scales of strategy updates.} This figure shows the evolution of the cooperation ratio over time on the WS (in panel (a)) and SL (in panel (b)) networks under different time scales of strategy updates, including fixing the time interval for strategy updates to 0.5, 1, 5, and allowing them to obey exponential and power-law distributions, respectively. The results suggest that the cooperation ratio evolves steadily as time progresses in all cases and that the number of cooperators with exponential and power-law distributions is higher than fixed values for the time interval at which individuals update their strategies.}
\label{t_f}
\end{figure}
\end{center}

\subsection{The Influence of Network Scale on the Cooperation Frequency}
\label{effect of network scale}

In this subsection, we evaluate the robustness of the model by examining the influence of network scale on the cooperation frequency of WS and SL networks under different pairs of parameters. We demonstrate the variation curves of the cooperation frequency on the WS and SL networks with respect to the network scale for various parameter combinations in Figs. \ref{N_f_WS} and \ref{N_f_SL}, where the network scales of WS and SL are respectively set as [1000, 21000] and [$30^2$, $210^2$].

\begin{center}
\begin{figure}[htbp]
\centering
\subfigure[WS]{
\includegraphics[scale=0.12]{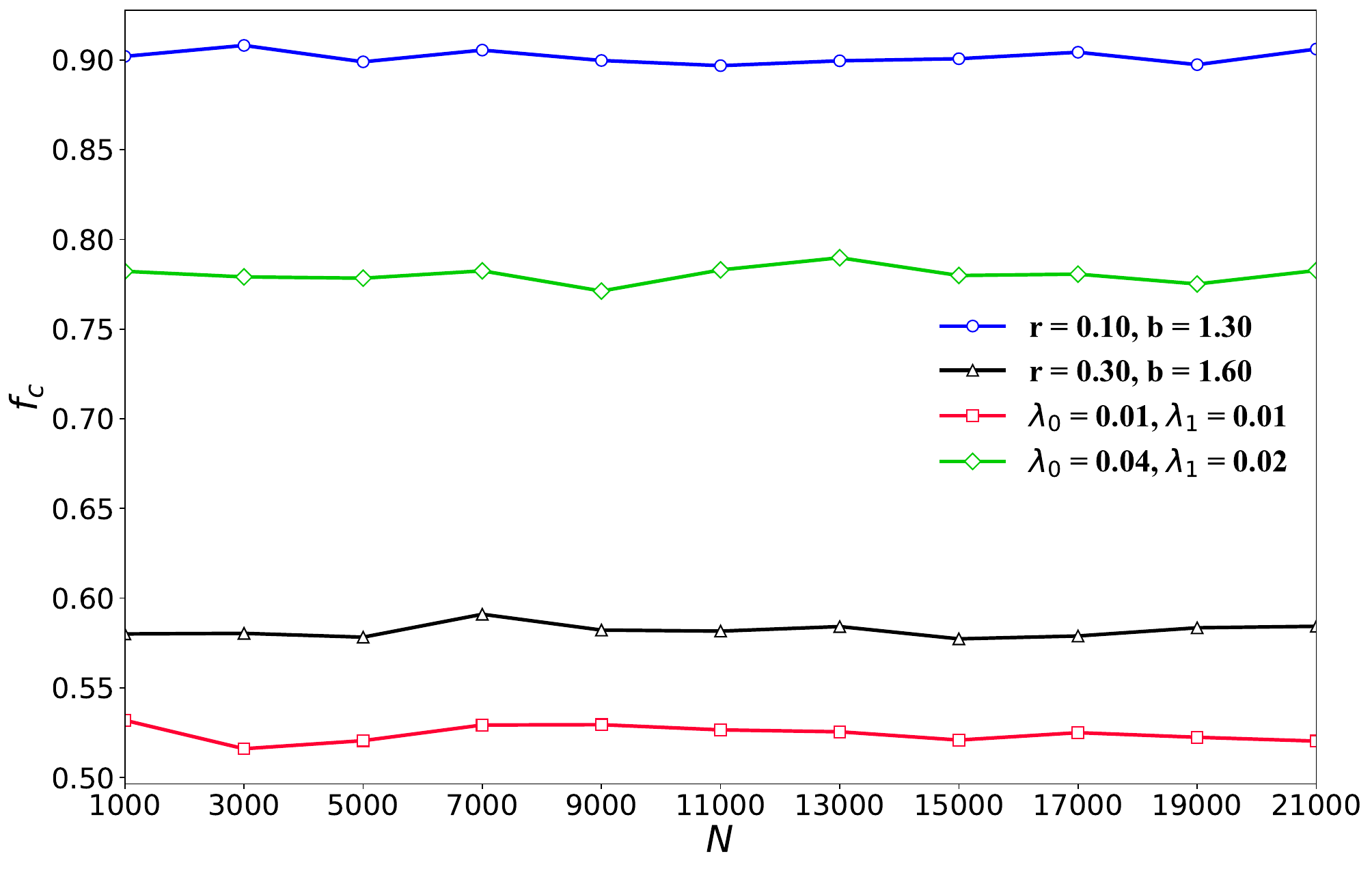}
\label{N_f_WS}}
\subfigure[SL]{
\includegraphics[scale=0.12]{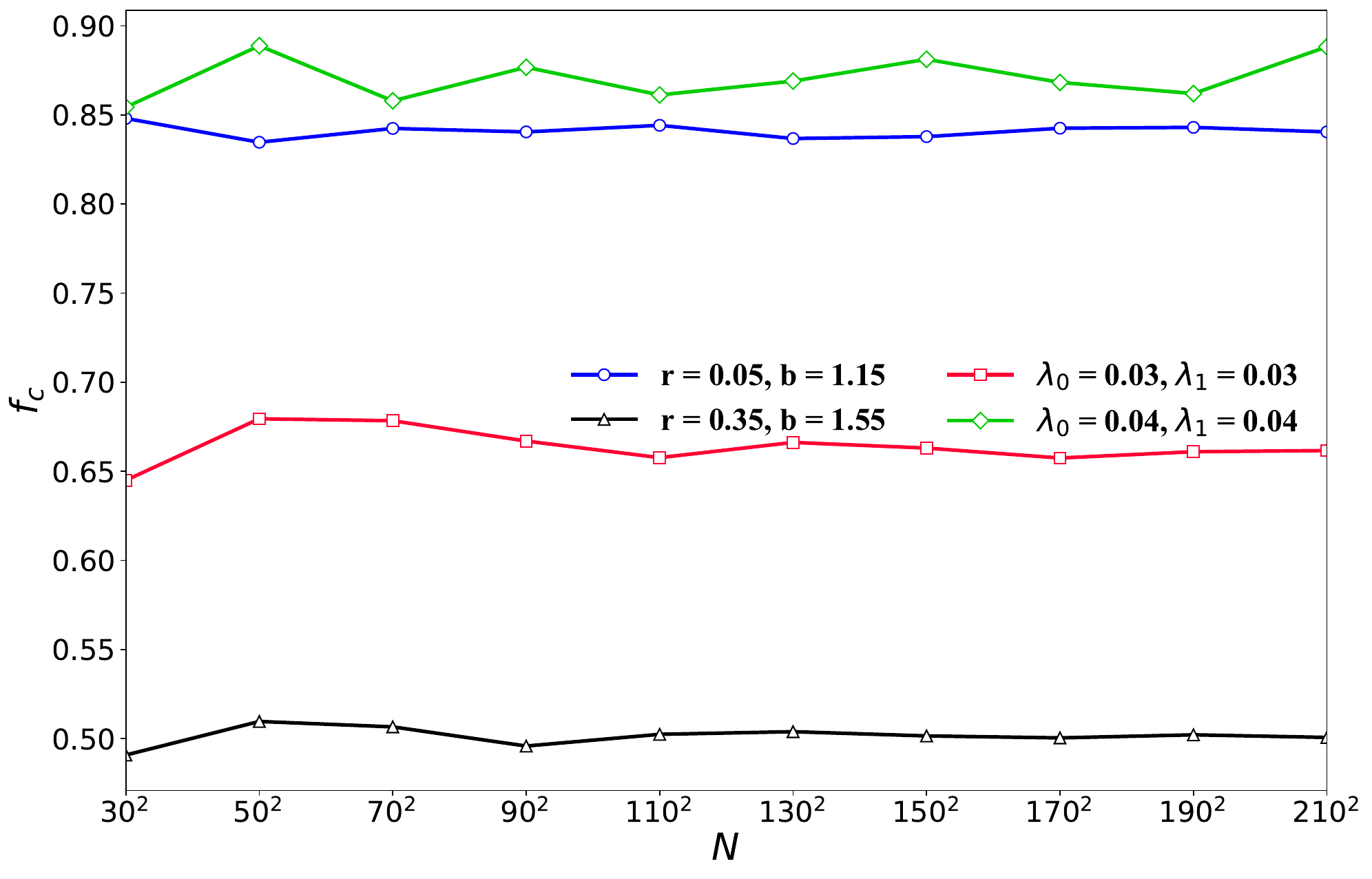}
\label{N_f_SL}}
\caption{\textbf{Plots of cooperation frequency against network scale under different parameters.} We present the cooperation frequency against network scale under different pairs of parameters on the WS (in subplot (a)) and SL (in subplot(b)) networks, respectively, where the $x$-axis is set as network scale, which indicates the number of individuals in the network, while the $y$-axis is set as the cooperation frequency. The ranges of $x$-axis of subplot(a) and subplot(b) are set to [1000, 21000] and [$30^2$, $210^2$], respectively. It can be seen that the network scale has almost no impact on the cooperation frequency on the both WS and SL networks.}
\label{N_f}
\end{figure}
\end{center}

In Figs. \ref{N_f_WS} and \ref{N_f_SL}, we present the evolutionary trends of the cooperator ratio in relation to the network scale for four different parameter pairs on the WS and SL networks respectively. Both WS and SL networks exhibit minimal fluctuation in the evolutionary curves of the cooperator ratio. To quantitatively measure the fluctuation of the cooperation ratio curve, we calculate the variance of the cooperation ratio for the four cases. On the WS network, the variances for the four cases, from top to bottom, are $[1.28, 2.05, 1.33, 2.09]\times 10^{-5}$. On the SL network, the variances for the four cases, from top to bottom, are $[13.88, 1.37, 9.20, 2.47]\times 10^{-5}$. These results indicate that the network scale almost does not affect the cooperative behavior when the network size is relatively large ($N>1000$), indirectly suggesting the robustness of our model. Additionally, we observe that the cooperation frequency represented by blue circles with parameters $r=0.10,b=1.30$ is higher than that represented by black triangles with parameters $r=0.30,b=1.60$. Similarly, the cooperation frequency marked by red squares with parameters $\lambda_0=0.01,\lambda_1=0.01$ is lower than that marked by green diamonds with parameters $\lambda_0=0.04,\lambda_1=0.02$. This implies that increasing the payoff parameters $r$ and $b$ hinders cooperation while enhancing the transition rates $\lambda_0$ and $\lambda_1$ promotes cooperation, which aligns with our previous analysis. This phenomenon and conclusion can also be observed in the SL network, as depicted in Fig. \ref{N_f_SL}.

\section{Conclusion and outlook}    \label{part IV}

Research on network evolutionary games has been providing a framework for understanding the emergence of the cooperative behavior of groups. In this paper, we introduce the game transition to the game between network individuals based on Markov processes, which is substantially different from previous studies. Each individual will transform the game state $G_i$ into the game state $G_{i+1}$ at the rate of $\lambda_i$, whereas the game state $G_{i+1}$ transitions to $G_i$ at the rate of $\mu_{i+1}$, where the duration of each individual staying in a certain game state is subject to an exponential distribution. By giving some definitions, we provide two lemmas and two theorems with their proofs, which illustrate the probability distribution and the expected number of individuals in each game state when they reach stationary. Additionally, the reputation mechanism is introduced into the model to make individuals more inclined to learn from individuals with high reputations when they update their strategies, which fits with reality in some situations. In the simulations, we consider three game states, including PDG, SDG, and SHG. We primarily investigate the individual number staying in different game states over time and their statistical distributions with different transition rates, whose results are in line with our theoretical analysis. Next, we analyze the effect of the transition rates between different game states on the cooperative behavior of the network and find that the transition rates mainly affect the individual game states and yield a change in the individual payoff matrix, which will afford the individual payoffs to change under the same interaction and therefore further lead to a change in the strategies of the individuals. Then, we explore the effect of the payoff parameters $r$ and $b$ on the cooperation ratio in the situations when only taking the game transition into account and taking both game transition and reputation mechanism into account. The results suggest that an increase in either $r$ or $b$ will inhibit the emergence of cooperative behavior and the facilitative effect of cooperation considering both aspects is stronger than that of only considering the game transition. Subsequently, we examine the evolution of the cooperation ratio under various time scales of strategy updates and observe that the cooperation ratio is higher when individuals update their strategies based on exponential and power-law distributions compared to fixed time intervals. Besides, the influence of different network scales on cooperation is also investigated and it is found that the network scale has minimal effect on the number of cooperators, thereby confirming the robustness of our model.

However, there are some other situations that deserve further consideration and study. For example, in this study, the duration of each individual staying in a certain game state follows an exponential distribution, while other probability distributions, such as the logarithmic normal distribution, power-law distribution, and uniform distribution, may yield different theoretical results and new findings. In addition, different strategy-updating rules, such as the Moran process, best-take-over, and replicator dynamics, can also be considered in the update process of the strategy and not only the Fermi function. It is also worth stating that we have investigated the transition of the game between individuals, while the transition of individual interaction is also worth considering. All these issues need further work, which will be the goal of our next stage of research.

\ifCLASSOPTIONcaptionsoff
  \newpage
\fi



%

\bibliographystyle{ieeetr}

%

\begin{IEEEbiography}[{\includegraphics[width=1in,height=1.25in,clip,keepaspectratio]{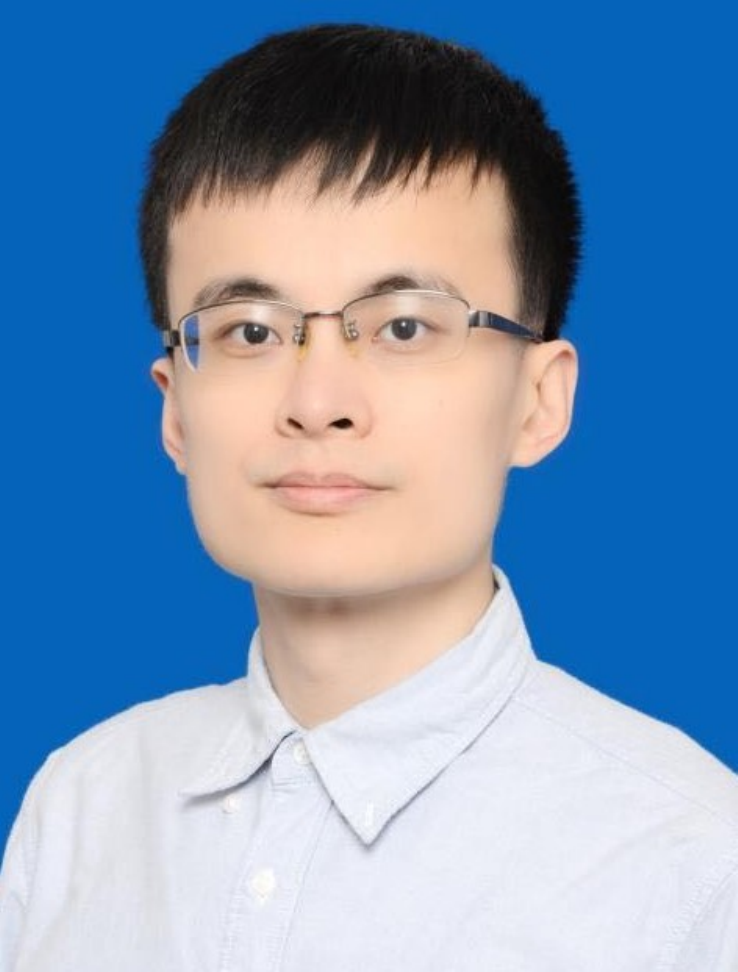}}]{Minyu Feng} (Member IEEE) received the B.S. degree in mathematics from the University of Electronic Science and Technology of China in 2010; the Ph.D. degree in computer science from the University of Electronic Science and Technology of China in 2018. From 2016 to 2017, he was a visiting scholar with the Potsdam Institute for Climate Impact Research, Germany, and Humboldt University, Berlin, Germany. Since 2019, he has been an associate professor in the College of Artificial Intelligence, Southwest University, Chongqing, China. He is a Senior Member of CCF, an academic member of IEEE and CAA. He currently serves as an editorial board member of the International Journal of Mathematics for Industry, also a guest editor in Entropy and Frontiers in Physics. His research interests include complex systems, stochastic processes, evolutionary games, and social computing.
\end{IEEEbiography}

\vspace{-2.3\baselineskip}
\begin{IEEEbiography}[{\includegraphics[width=1in,height=1.25in,clip,keepaspectratio]{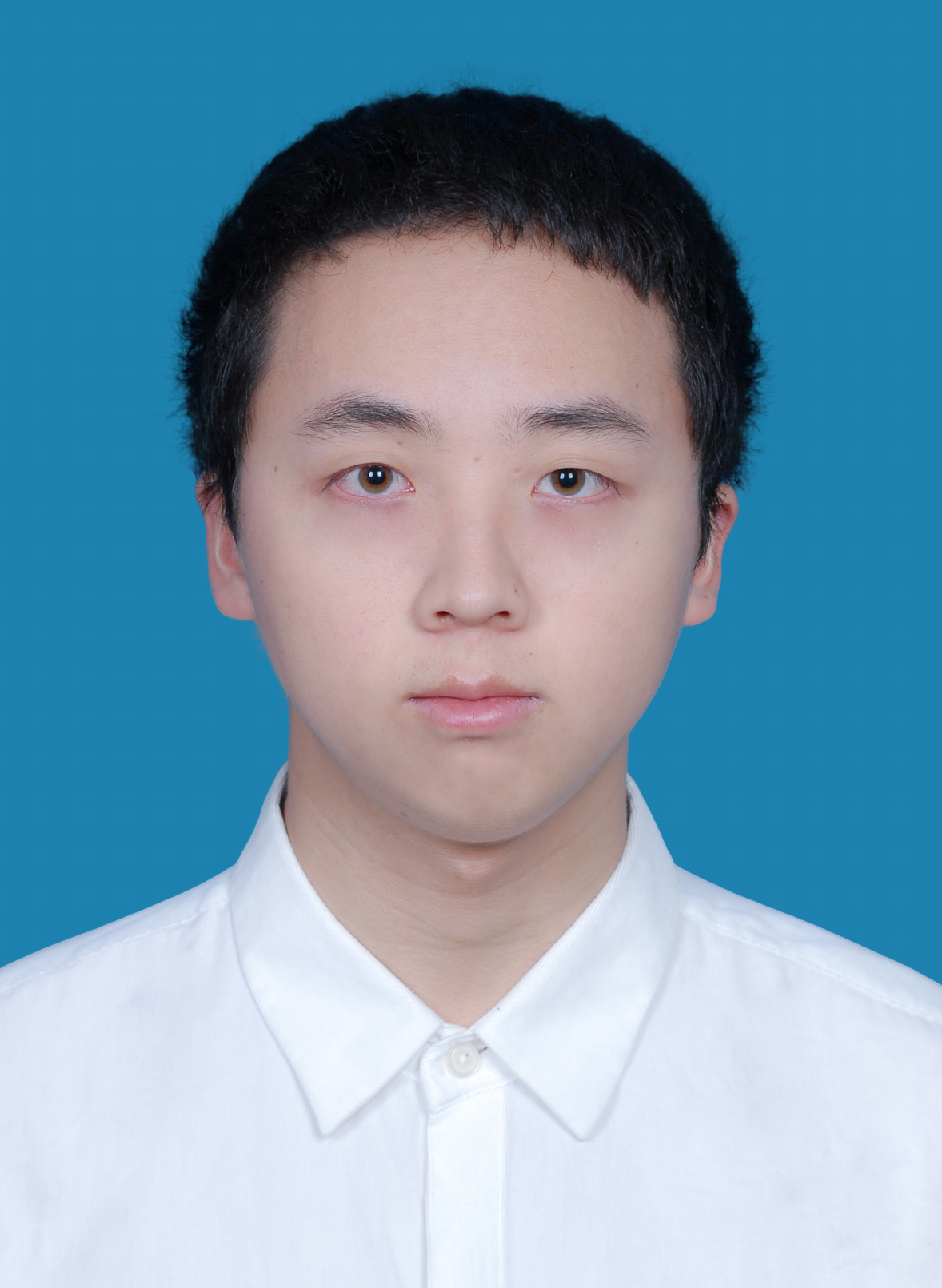}}]{Bin Pi} received the B.E. degree in data science and big data technology from the College of Artificial Intelligence, Southwest University, Chongqing, China. He is currently pursuing the M.S. degree with the School of Mathematical Sciences, University of Electronic Science and Technology of China, Chengdu, China. His current research interests include complex networks, evolutionary games, stochastic processes, and nonlinear science.
\end{IEEEbiography}

\vspace{-2.3\baselineskip}
\begin{IEEEbiography}[{\includegraphics[width=1in,height=1.25in,clip,keepaspectratio]{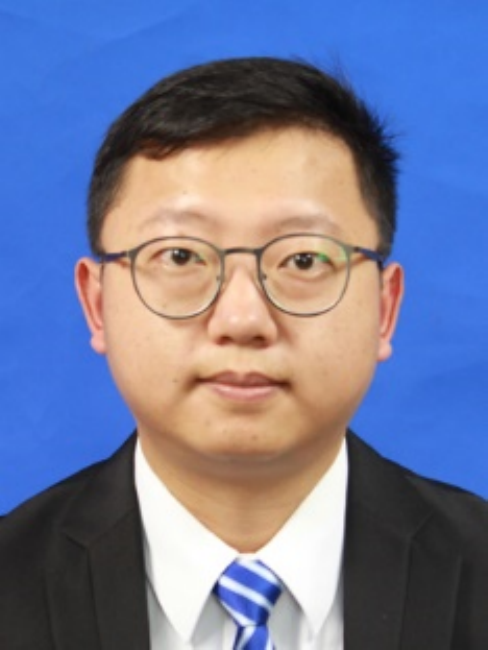}}]{Liang-Jian Deng} (Senior Member, IEEE) received the B.S. and Ph.D. degrees in applied mathematics from the School of Mathematical Sciences, University of Electronic Science and Technology of China (UESTC), Chengdu, China, in 2010 and 2016, respectively. From 2013 to 2014, he was a joint-training Ph.D. Student with Case Western Reserve University, Cleveland, OH, USA. In 2017, he was a Post-Doctoral Researcher with Hong Kong Baptist University (HKBU), Hong Kong. In addition, he has stayed at the Isaac Newton Institute for Mathematical Sciences, University of Cambridge, Cambridge, U.K., and HKBU, for short visits. He is currently a Professor with the School of Mathematical Sciences, UESTC. His research interests include the use of optimization modeling, deep learning, and numerical PDEs, to address several tasks in image processing and computer vision, e.g., resolution enhancement and restoration. Please visit his homepage for more info.: https://liangjiandeng.github.io/.
\end{IEEEbiography}

\vspace{-2.3\baselineskip}
\begin{IEEEbiography}[{\includegraphics[width=1in,height=1.25in,clip,keepaspectratio]{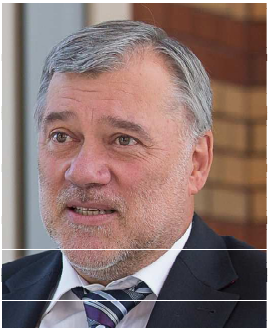}}]{J\"{u}rgen Kurths} received the B.S. degree in mathematics from the University of Rostock, Rostock, Germany, the Ph.D. degree from the Academy of Sciences, German Democratic Republic, Berlin, Germany, in 1983, the Honorary degree from N.I.Lobachevsky State University, Nizhny Novgorod, Russia in 2008, and the Honorary degree from Saratow State University, Saratov, Russia, in 2012.

From 1994 to 2008, he was a Full Professor with the University of Potsdam, Potsdam, Germany. Since 2008, he has been a Professor of nonlinear dynamics with the Humboldt University of Berlin, Berlin, Germany, and the Chair of the Research Domain Complexity Science with the Potsdam Institute for Climate Impact Research, Potsdam, Germany. He has authored more than 700 papers, which are cited more than 60000 times (H-index: 111). His main research interests include synchronization, complex networks, time series analysis, and their applications.

Dr. Kurths was the recipient of the Alexander von Humboldt Research Award from India, in 2005, and from Poland in 2021, the Richardson Medal of the European Geophysical Union in 2013, and the Eight Honorary Doctorates. He is a Highly Cited Researcher in Engineering. He is a member of the Academia 1024 Europaea. He is an Editor-in-Chief of CHAOS and on the Editorial Boards of more than ten journals. He is a Fellow of the American Physical Society, the Royal Society of 1023 Edinburgh, and the Network Science Society.
\end{IEEEbiography}








\end{document}